\newtheorem{theorem}{Theorem}
\newtheorem{prop}[theorem]{Proposition}
\newtheorem{defin}{Definition}
\newtheorem{rem}{Remark}
\begin{document}
	\title{Classifying locally distinguishable sets: No activation
		across bipartitions}
	\author{Atanu Bhunia}
	\email{atanu.bhunia31@gmail.com}
	\affiliation{Department of Mathematical Sciences, Indian
		Institute of Science Education and Research Berhampur, Laudigam,
		Konisi, Berhampur 760 003,
		Odisha, India}
	\author{Saronath Halder}
	\email{saronath.halder@vitap.ac.in}
    \affiliation{Department of Physics, School of Advanced Sciences, VIT-AP University, Beside AP Secretariat, Amaravati 522241, Andhra Pradesh, India}
	\author{Ritabrata Sengupta}
	\email{rb@iiserbpr.ac.in}
	\affiliation{Department of Mathematical Sciences, Indian
		Institute of Science Education and Research Berhampur, Laudigam,
		Konisi, Berhampur 760 003,
		Odisha, India}
	
	\begin{abstract}
		A set of orthogonal quantum states is said to be locally
		indistinguishable if they cannot be perfectly distinguished by local
		operations and classical communication (LOCC). Otherwise, the states
		are locally distinguishable. Interestingly, locally indistinguishable states can have productive applications in quantum information processing protocols. In this sense, locally indistinguishable states are useful. On the other hand, it is usual to consider that locally distinguishable states are
		useless. Nevertheless, recent works suggest that locally
		distinguishable states should be given due consideration as in certain
		situations these states can be converted to locally indistinguishable
		states under orthogonality-preserving LOCC (OP-LOCC). Such a
		counterintuitive phenomenon motivates us to ask when the aforesaid
		conversion is possible and when it is not. In this work, we provide
		different structures of locally distinguishable product and entangled
		states which do not allow the aforesaid conversion. We also provide
		certain structures of locally distinguishable states which allow the
		aforesaid conversion. In this way, we classify the locally
		distinguishable sets by introducing hierarchies among them. In a
		multipartite system, this study becomes more involved as there exist
		multipartite locally distinguishable sets which cannot be converted to
		locally indistinguishable sets by OP-LOCC across any bipartition. We
		say this as ``no activation across bi-partitions".
		
	\end{abstract}
	\keywords{Multipartite entanglement, strongly local class,
		local indistinguishability, activation of non-locality}
	\maketitle
	
	\section{Introduction}
	\par Non-local properties of quantum systems have a class
	exclusive from Bell non-locality \cite{Bell non-locality}. Specifically,
	when a set of orthogonal quantum states cannot be perfectly
	distinguished by local operations and classical communication (LOCC),
	it reflects a fundamental nonlocal feature of quantum physics
	\cite{BennettPB1999}. Local distinguishability of quantum states
	refers to the task of identifying a state from a set of prespecified
	orthogonal states shared among parties separated by arbitrary
	distances and LOCC being the only legit class of operation
	\cite{BennettUPB1999,Walgate2000,Virmani,Ghosh2001,Groisman,Walgate2002,Divincinzo,Horodecki2003,Fan2004,Ghosh2004,Nathanson2005,Watrous2005,Niset2006,Ye2007,Fan2007,Runyo2007,somsubhro2009,Feng2009,Runyo2010,Yu2012,Yang2013,Zhang2014,somsubhro2010,yu2014,somsubhro2014,somsubhro2016,bennett1996,popescu2001,xin2008,somsubhro2009(1)}.
	The non-locality of orthogonal quantum states can be used for various
	practical purposes such as data hiding
	\cite{terhal58,divincenzo580,lamidatahiding,terhaldatahiding,chaves2020,wehner2020,winterdatahiding,haydendatahiding},
	quantum secret sharing \cite{rahaman330,markham309,wang320}, and
	similar applications. Consequently, in the past two decades,
	considerable attention has been paid to the study of local
	distinguishability of orthogonal quantum states and the exploration of
	the relationship between quantum non-locality and entanglement
	\cite{Zhang2015,Wang2015,Chen2015,Yang2015,Zhang2016,Xu2016(2),Zhang2016(1),Xu2016(1),Halder2019strong
		non-locality,Halder2019peres
		set,Xzhang2017,Xu2017,Wang2017,Cohen2008,somsubhro2018,zhang2018,Halder2018,Yuan2020,Rout2019,bhunia2020,bhunia2023,biswas2023,Zhang2019,bhunia2022}.
	\par In quantum information processing, one of the most
	important physical scenario occurs when a multipartite system is
	distributed to different parties separated by arbitrary distances. The
	parties perform multiple rounds of local measurements on their
	respective subsystems, each time globally broadcasting their
	measurement outcomes. Other parties then choose their measurement
	setups depending on the outcomes and continue till required. This
	class of operations is known as LOCC. From an experimental
	perspective, LOCC operations have a natural attraction since local
	quantum measurements are much easier to perform on a composite system
	than their nonlocal counterparts. In fact, on a more fundamental
	level, LOCC is linked to the very notion of entanglement since
	entanglement is precisely the multipartite correlations that cannot be
	generated by LOCC \cite{quantum entanglement}. However, despite this
	general importance, the class of LOCC is still not satisfactorily
	understood.
	\par Local distinguishability of quantum states plays an important
	role in studying the restrictions of LOCC. In 2000 Walgate et
	al.~\cite{Walgate2000} have evinced that any two orthogonal
	multipartite pure states can be perfectly distinguished by allowing
	LOCC. Nevertheless, if there are more than two orthogonal pure states,
	then there can be local indistinguishability. The local
	indistinguishability of a set of pairwise orthogonal multipartite
	states is a signature of non-locality shown by those states.
	Since entanglement is intrinsically
	connected to non-locality, one can assume that mutually orthogonal
	product states can be perfectly distinguished by LOCC. However,
	entanglement is not necessary for local indistinguishability of
	quantum states
	\cite{BennettPB1999,BennettUPB1999,Zhang2015,Wang2015,Chen2015,Yang2015,Zhang2016,Xu2016(2),Zhang2016(1),Xu2016(1),bhunia2023,
		biswas2023, Halder2019strong non-locality,Halder2019peres
		set,Xzhang2017,Xu2017,Wang2017,Cohen2008,Zhang2019,somsubhro2018,zhang2018,Halder2018,Yuan2020,Rout2019,bhunia2020,bhunia2022,bhuniaubb2024,indra2025,subrata2024,bhunia2025}.
	In 1999 Bennett et al.~\cite{BennettPB1999} first exhibited a set of
	nine product states in a two-qutrit system, which cannot be
	perfectly distinguished by LOCC and presented the phenomenon of
	``non-locality without entanglement''. 
	
	
	Due to practical applications, local indistinguishability of
	quantum states can be considered as a resource in quantum information
	processing. Now, if there are only locally distinguishable sets at hands,
	how can we transfer them into resources that have applications in data
	hiding? This is what the authors of the paper \cite{bandhyopadhyay201}
	recently studied. In fact, they studied the following problem: is
	there any set of orthogonal states which can be locally
	distinguishable, but under an orthogonality-preserving (OP) local
	measurement, each outcome will lead to a locally indistinguishable
	set. As there are some trivial sets with this property, they
	introduced the concept of local irredundancy. An orthogonal set is
	said to be locally redundant if it remains orthogonal after discarding
	one or more subsystems. Otherwise, it is said to be locally
	irredundant. If a locally irredundant set satisfies the aforementioned
	property, then we say that its non-locality can be activated genuinely,
	i.e., hidden non-locality can be revealed. In
	Ref.~\cite{bandhyopadhyay201}, the authors provided several examples
	of such sets with entanglement. However, deeper research on this
	property remains to be explored.
	For example, the following questions are required to be
	studied. Is there any multipartite locally distinguishable set (with
	or without entanglement) whose non-locality cannot be activated even if
	(specific) joint operations are allowed? In which multipartite state
	spaces can such locally distinguishable sets be constructed? Answering such
	questions are particularly important to understand when one can have
	activation of non-locality. See also \cite{subrata2024} in this regard. In the following, we provide a table (Table~\ref{tab:locc}) where different activation results and corresponding references are summarized. From this table, it can be understood what has been done so far and what are the present contributions.
    \begin{widetext}
    \begin{center}
    \begin{table}[h]
	\centering
	\renewcommand{\arraystretch}{1}
	\setlength{\tabcolsep}{1.5pt}
	\begin{tabular}{|l|l|}
	\hline
	\textbf{Results} & \textbf{Status}\\[1 ex]
    \hline\hline
	Locally distinguishable sets of orthogonal entangled states can & Bipartite and multipartite examples \\
    be converted to locally indistinguishable sets under OP-LOCC. & are constructed in \cite{bandhyopadhyay201}.\\[1 ex]
    \hline
    Similar results as described in the above row are observed & Bipartite and multipartite examples \\   
    considering orthogonal product states. & are constructed in \cite{Li2022}.\\[1 ex]
    \hline
    Locally distinguishable sets of orthogonal product states can & Bipartite examples are not possible.\\
    be converted to locally indistinguishable sets under OP-LOCC, & Thus, multipartite examples are \\
    such that the sets exhibit local indistinguishability across & constructed in \cite{GhoshStrongActivation2022}.\\
    every bipartition. & \\[1 ex]
    \hline
    Locally distinguishable sets of orthogonal states cannot be & Bipartite examples are not possible.\\
    converted to locally indistinguishable sets under OP-LOCC & Thus, multipartite examples are \\
    when all parties are parties are spatially separated. However, & constructed in \cite{subrata2024}.\\
    it can be done when two parties collaborate and perform joint & \\
    measurements. & \\[1 ex]
    \hline
	There are instances when locally distinguishable sets of & Bipartite examples are not possible.\\
    orthogonal states cannot be converted to locally indistinguishable & Thus, multipartite examples are \\
    sets under OP-LOCC even if joint measurements are allowed (all & constructed here. We also develop\\
    but one parties are allowed to collaborate and perform joint & methodology for different constructions.\\
    measurements).                                           & A possible classification among existing\\
    & and presently provided structures are\\
    & discussed. (present contributions)\\[1 ex]
    \hline
    \end{tabular}
	\caption{Summary of LOCC activation results and corresponding references are given.}\label{tab:locc}
	\end{table}
	  \end{center}
    \end{widetext}
	
	In the process of studying the aforesaid questions, here we
	manage to construct sets of multipartite states which are not activable
	in any bipartition. In other words, such locally distinguishable sets
	cannot be transformed to a locally indistinguishable set in any
	bipartition under orthogonality-preserving LOCC. We say this as the worst-case scenario in view of
	non-locality activation. Because if we consider all subsystems together
	in a single location, then, anyway, there will be no local
	indistinguishability as we are dealing with orthogonal states here.
	The discovery of this class of sets also leads to a hierarchy among
	the multipartite locally distinguishable sets. The structures that we
	provide here can be easily generalised. In particular, for bipartite
	systems, we consider higher-dimensional Hilbert spaces compared to
	some known results of two-qubit or qubit-qudit cases
	\cite{bandhyopadhyay201}. Then, we compare between locally
	distinguishable product states and entangled states. 	
	The paper is organised as follows: in \S\ref {A1},
	necessary definitions and other preliminary concepts are presented. In
	\S\ref{A2}, we provide activable and non-activable sets of product
	states in bipartite as well as in multipartite scenarios. In \S\ref{A3}, we consider entangled states and present comparisons
	between product states and entangled states.     Finally, the
	conclusion is drawn in \S\ref{A4}.
	\section{Preliminaries}
	\label{A1}
	A measurement on a $d$-dimensional quantum system can be
	expressed as a set of positive operator-valued measure (POVM) elements
	$\left\{M_k\right\}_k$. These elements are the positive semidefinite
	Hermitian matrices that satisfy the completeness relation $\sum_k
	M_k=\mathrm{I}_d$, where $\mathrm{I}_d$ is the identity matrix of
	order $d$. In this section, we will first review some of the
	definitions which are used throughout the following sections.
	\begin{defin}
		\cite{Walgate2002, Halder2018} If all the POVM elements of
		a measurement structure, corresponding to a discrimination task of a
		given set of states, are proportional to the identity matrix, then
		such a measurement is not useful to extract information for this task
		and is called a \emph{trivial measurement}. Conversely, there should be at
		least one POVM element not proportional to the identity matrix, then the 
		measurement is called as \emph{non-trivial}.	    
	\end{defin}
	\begin{defin} \cite{Walgate2002, Halder2018}  Consider a local
		measurement to distinguish a fixed set of pairwise orthogonal quantum
		states. If the post-measurement states likewise exhibit the
		property of pairwise orthogonality, then such a measurement shall be
		termed as an \emph{orthogonality-preserving local measurement (OPLM)}.
	\end{defin}
	\noindent    In this work, we always stick to OPLM. 
	\begin{defin}
		\cite{Halder2019strong non-locality} A set of orthogonal
		quantum states is \emph{locally irreducible} if it is not possible to
		eliminate one or more quantum states from the set by nontrivial
		orthogonality-preserving local measurements.
	\end{defin}
	\begin{defin} 
		A set of orthogonal quantum states is said to be \emph{locally
			indistinguishable} if, whilst it may be possible to eliminate one or
		more states from the set via an OPLM, it is impossible to
		completely distinguish the entire set using a non-trivial OPLM.
	\end{defin}
	\par Therefore, it is by definition implied that all locally
	irreducible states are locally indistinguishable but the converse is
	not true.   
	
	\begin{defin}
		A locally distinguishable set $\mathcal{S}$ of multipartite
		orthogonal states is said to be locally activable if it can be
		transformed to a set of locally indistinguishable orthogonal states
		via local orthogonality-preserving measurements.    
	\end{defin} 
	\par Let us assume that the total number of parties is $N$. 
	\begin{defin}
		A locally distinguishable set of multipartite orthogonal states, S, is
		deemed to possess \emph{hidden non-locality of type-1} if, upon spatial
		separation of all constituent parties, the set may be activated by
		means of LOCC.
		We denote this by $\mathcal{H}^{\text{LOCC}}_{1}(\mathcal{S})\neq0$.
		Also, if a locally distinguishable set of multipartite orthogonal
		states $\mathcal{S}$ is said to have \emph{hidden non-locality of
			type-$k$}, if to activate the set by LOCC, at least $k$ parties are
		needed to come together, whereas all other parties are spatially
		separated. We denote this by
		$\mathcal{H}^{\text{LOCC}}_{k}(\mathcal{S})\neq0$.
	\end{defin}
	
	The maximum value of $k$ in
	$\mathcal{H}^{\text{LOCC}}_{k}(\mathcal{S})$ can be equal to $(N-1)$,
	because, if $k=N$, then, all parties are coming together and there is
	no local indistinguishability. This happens as we are dealing with
	orthogonal states. Naturally, in a bipartite scenario, the only case
	that appears is $k=1$. In this work, when we consider a particular value 
    of $k$, then to check activability/non-activability, we consider local 
    distinguishability/indistinguishability in $k$-partitions only. If for 
    $k=N-1$, it is not possible to convert a locally distinguishable set to 
    an indistinguishable one under OP-LOCC, i.e., when we get non-activability 
    in all $k$-partitions for maximum value of $k$, then we say it as worst 
    case scenario here.
	
	\section{Non-activable and activable product states}\label{A2}
	In this section, we first construct a class of orthogonal
	product states which cannot be activable by LOCC. For better
	understanding, we first give an example in
	${\mathbb{C}}^{4}\otimes{\mathbb{C}}^{4}$ and then,
	generalise the result. 	
	\begin{figure}[h!]
		\centering
		\includegraphics[width=0.36\textwidth]{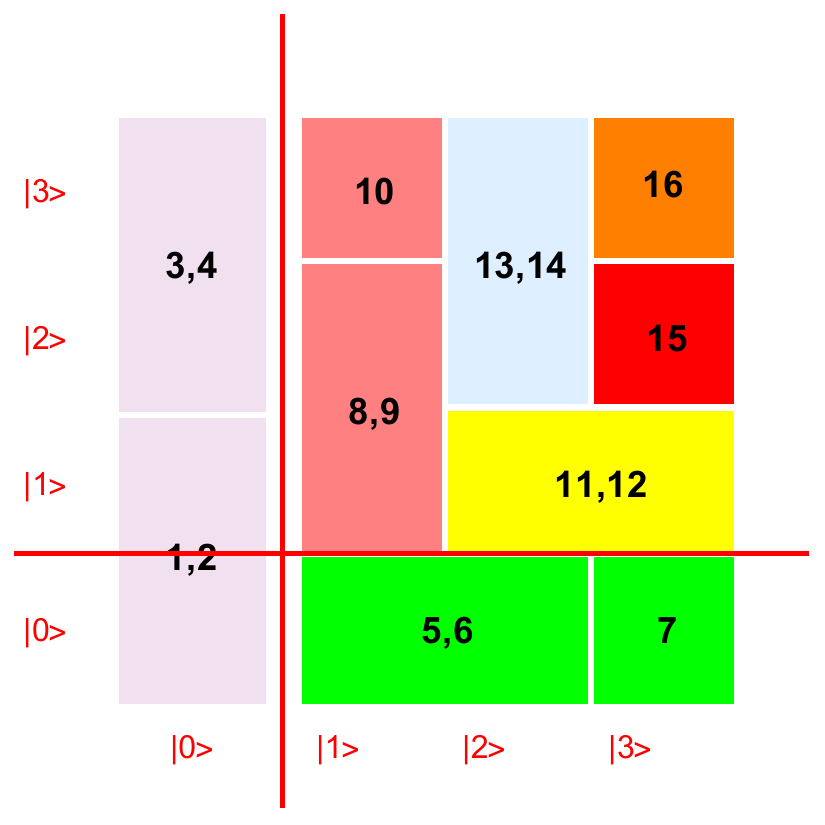}
		\caption{\emph{(Color online)} Representation of product states in ${\mathbb{C}}^{4}\otimes{\mathbb{C}}^{4}$. 
			The bottom side represents Alice's side and
			top left side represents Bob's side (this is also maintained in other
			figures unless explicitly stated). We represent quantum states
			$\mathbf{\ket{i\pm\overline{i+1}
				}\ket{j}}$ or, $\mathbf{\ket{j}\ket{i\pm\overline{i+1}
			}}$ by rectangular tiles where $\mathbf{\ket{i\pm\overline{i+1}
				}=\frac{1}{\sqrt{2}}(\ket{i}\pm\ket{i+1})},$
			for integer `$i$'. Each of the square tiles represents a state of the
			form $\mathbf{\ket{j}\ket{k}}$. Tile indices correspond to consecutive
			ordered basis states of set $\mathcal{S}_{1}$, while tile colours
			indicate compatible measurement setups for both
			parties.}\label{fig44}
	\end{figure}
	
	\par Consider the set $\mathcal{S}_1 \subseteq \mathbb{C}^4 \otimes \mathbb{C}^4$, by,
	\begin{multline}
		\mathcal{S}_1=
		\begin{Bmatrix}
			\mathbf{|0\rangle_{A}|{\mathcal{X}}_{01}^{\pm}\rangle_{B},}\;
			\mathbf{|0\rangle_{A}|{\mathcal{X}}_{23}^{\pm}\rangle_{B},}\;
			\mathbf{|{\mathcal{\xi}}_{12}^{\pm}\rangle_{A}|0\rangle_{B},}\;\\
			\mathbf{|{\mathcal{\xi}}_{3}\rangle_{A}|0\rangle_{A},}\;
			\mathbf{|1\rangle_{A}|{\mathcal{X}}_{12}^{\pm}\rangle_{B},}\;
			\mathbf{|1\rangle_{A}|{\mathcal{X}}_{3}\rangle_{B},}\;\\
			\mathbf{|{\mathcal{\xi}}_{23}^{\pm}\rangle_{A}|1\rangle_{A},}\;
			\mathbf{|2\rangle_{A}|{\mathcal{X}}_{23}^{\pm}\rangle_{B},}\;\\
			\mathbf{|{\mathcal{\xi}}_{3}\rangle_{A}|2\rangle_{B},}\;
			\mathbf{|{\mathcal{\xi}}_{3}\rangle_{A}|{\mathcal{X}}_{3}\rangle_{B}}\\
		\end{Bmatrix}
		\label{1}
	\end{multline}
	where,
	$\mathbf{|{\mathcal{\xi}}_{ij}^{\pm}\rangle_{A}=\left(\frac{|i\rangle\pm
			|j\rangle}{\sqrt{2}}\right)_{A},\;|{\mathcal{X}}_{ij}^{\pm}\rangle_{B}=\left(\frac{|i\rangle\pm
			|j\rangle}{\sqrt{2}}\right)_{B}},$ and
	$\mathbf{|{\mathcal{\xi}}_{k}\rangle_{A}=|k\rangle_{A},\;|{\mathcal{X}}_{k}\rangle_{B}=|k\rangle_{B}}$,
	see Fig.~\ref{fig44}.
	
	\begin{prop}\label{prop1}
		The set $\mathcal{S}_1$ does not possess any activable non-locality
		under orthogonality-preserving LOCC, i.e.,
		$\mathcal{H}^{\text{LOCC}}_{1}(\mathcal{S}_1) = 0$.
	\end{prop}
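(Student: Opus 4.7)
The plan is to show two facts that jointly imply $\mathcal{H}^{\text{LOCC}}_{1}(\mathcal{S}_1)=0$: (a) $\mathcal{S}_1$ is itself locally distinguishable, so the statement is non-vacuous; and (b) every orthogonality-preserving local measurement (OPLM) applied by either Alice or Bob leaves a post-measurement set that remains locally distinguishable. Together these preclude any OP-LOCC conversion of $\mathcal{S}_1$ to a locally indistinguishable set. For (a), I would read off an LOCC protocol directly from the tile colours in Fig.~\ref{fig44}: Alice first performs a coarse projective measurement respecting her tile structure, and each outcome leaves Bob with a mutually orthogonal sub-collection which he identifies in his own local basis.

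The core of (b) is a careful characterisation of admissible OPLMs. Writing Alice's generic POVM element as a $4\times 4$ Hermitian positive semidefinite matrix $M=(m_{ij})$, the orthogonality-preservation condition $\langle\psi|(M\otimes I)|\phi\rangle=0$ for each pair $|\psi\rangle,|\phi\rangle\in\mathcal{S}_1$ whose Bob components have nonzero overlap yields a system of linear relations on the $m_{ij}$. Running through representative tile pairs, for instance $|0\rangle|\mathcal{X}_{01}^{\pm}\rangle$ against $|\xi_{12}^{\pm}\rangle|0\rangle$ forces $m_{01}=m_{02}=0$, while $|\xi_{12}^{+}\rangle|0\rangle$ against $|\xi_{12}^{-}\rangle|0\rangle$ forces $m_{11}=m_{22}$; analogous tile pairs kill the remaining off-diagonals and equate $m_{22}=m_{33}$. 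The upshot I expect is that $M=\alpha|0\rangle\langle 0|+\beta(I-|0\rangle\langle 0|)$ for some $\alpha,\beta\ge 0$, since no Alice superposition in $\mathcal{S}_1$ involves $|0\rangle$. A parallel but tighter analysis on Bob's side, exploiting the three columns carrying $|\mathcal{X}_{01}^{\pm}\rangle$, $|\mathcal{X}_{12}^{\pm}\rangle$, and $|\mathcal{X}_{23}^{\pm}\rangle$, additionally forces $n_{00}$ to agree with the remaining diagonal entries and all off-diagonals (including $n_{01}$, via $|\xi_{12}^{+}\rangle|0\rangle$ against $|1\rangle|\mathcal{X}_{12}^{\pm}\rangle$) to vanish; hence Bob's POVM element $N$ must be a scalar multiple of the identity.

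Given this characterisation, the only non-trivial first-round OPLM is Alice's, and its outcomes project $\mathcal{S}_1$ either onto the four states with Alice part $|0\rangle$ (directly distinguished by Bob in a suitable local basis) or onto the remaining twelve states. For the twelve-state set I would iterate the same characterisation: its admissible Alice OPLMs collapse to the identity, while its Bob OPLMs split into the same two structural types, each branch again yielding a smaller sub-ensemble admitting a local discrimination protocol, down to a base case of at most four states decided by a single local measurement. The main obstacle I anticipate is making this recursion airtight across all branches, particularly ruling out that some composition of Alice and Bob OPLMs lands on a Bennett-type $3\otimes 3$ domino configuration masquerading as locally indistinguishable. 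Handling this will rely on verifying each reachable reduced set directly, aided by the fact that every OPLM step with a nontrivial kernel strictly contracts the support on one side, so the recursion terminates in finitely many rounds at a trivially distinguishable leaf.
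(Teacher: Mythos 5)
Your proposal is correct and follows essentially the same route as the paper's proof: deriving, from pairwise orthogonality constraints, that Alice's admissible OPLM elements must have the form $\operatorname{diag}(\delta_0,\gamma,\gamma,\gamma)$ while Bob's first-round elements are proportional to the identity, so the only nontrivial move is Alice's $\{|0\rangle\langle 0|,\, I-|0\rangle\langle 0|\}$ measurement, and then iterating this peeling argument with the parties' roles alternating until every branch terminates in a locally distinguishable set. Your explicit recursion with strictly shrinking support is just a slightly more careful packaging of the paper's ``the process repeats a finite number of times'' step.
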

	\begin{proof} 
		Suppose Alice makes the first measurement. Let 
		${\mathcal{M}_{A}^m}^\dagger\mathcal{M}_{A}^m=[m^a_{ij}]_{4\times4}$ denote any arbitrary POVM operator of Alice with outcome
		$m$ such that the post-measurement  states $\left\{\mathcal{M}_{A}^m \otimes I_{B} \left|\psi_{i}\right\rangle,|\psi_{i}\rangle\in \mathcal{S}_1\right \}$ are mutually
		orthogonal. As, $m^a_{i j}=0$ is necessary and sufficient for
		$m^a_{j i}=0,\, i<j$. We only need to show that $m^a_{i j}=0, i<j$, in the following. 
		For the states $
		\mathbf{|0\rangle_{A}|{\mathcal{X}}_{01}^{+}\rangle_{B}}$ and $\mathbf{
			|1\rangle_{A}|{\mathcal{X}}_{12}^{+}\rangle_{B}},$ it is easy to see that 
		$\left\langle
		\mathbf{0}\left|{\mathcal{M}_{A}^m}^\dagger\mathcal{M}_{A}^m\right|
		\mathbf{1}\right\rangle_{\mathbf{A}} \mathbf{\left\langle 0+1|1+2\right\rangle_{B}}=0.$ Thus,
		$m^a_{01}=m^a_{10}=0 .$ 
		\par In the same way, for the states $\{\mathbf{|0\rangle_{A}|{\mathcal{X}}_{23}^{+}\rangle_{B},
			|2\rangle_{A}|{\mathcal{X}}_{23}^{+}\rangle_{B}}\},$ and $\{\mathbf{|0\rangle_{A}|{\mathcal{X}}_{23}^{+}\rangle_{B},
			|{\mathcal{\xi}}_{3}\rangle_{A}|2\rangle_{B}}\}$,
		we can see that $m^a_{02}=m^a_{20}=0, m^a_{03}=m^a_{30}=0$, respectively. Similarly, if we choose the states $	
		\{\mathbf{|1\rangle_{A}|{\mathcal{X}}_{12}^{+}\rangle_{B},
			|2\rangle_{A}|{\mathcal{X}}_{23}^{+}\rangle_{B}}\},$ and $\{\mathbf{|1\rangle_{A}|{\mathcal{X}}_{12}^{+}\rangle_{B},
			|{\mathcal{\xi}}_{3}\rangle_{A}|2\rangle_{B}}\}$,
		we get $m^a_{12}=m^a_{21}=0, m^a_{13}=m^a_{31}=0$. 
		\par Now considering the states $\{\mathbf{|{\mathcal{\xi}}_{3}\rangle_{A}|2\rangle_{B},
			|2\rangle_{A}|{\mathcal{X}}_{23}^{+}\rangle_{B}}\}$. For this, we have
		$\left\langle
		\mathbf{2}\left|{\mathcal{M}_{A}^m}^\dagger\mathcal{M}_{A}^m\right|\mathbf{3}\right\rangle_{\mathbf{A}}
		\mathbf{\left\langle 2|2+3\right\rangle_{B}}=0$, which implies that
		$m^a_{23}=m^a_{32}=0 .$ Therefore, with respect to the basis $\mathcal{S}_1$,
		${\mathcal{M}_{A}^m}^\dag \mathcal{M}_{A}^m$ is diagonal and
		${\mathcal{M}_{A}^m}^\dagger\mathcal{M}_{A}^m$ can be written as $\mathrm{diag}\left(\delta_{0},
		\delta_{1},\delta_{2},\delta_{3}\right)$.
		\par Now considering
		$\mathbf{|{\mathcal{\xi}}_{12}^{\pm}\rangle_{A}|0\rangle_{A}},$ we get
		$\mathbf{\quad\left\langle1+2\left|{\mathcal{M}_{A}^m}^\dagger\mathcal{M}_{A}^m\right|1-2\right\rangle_{A}\langle
			0|0\rangle_{B}}=0, \quad$ i.e.
		$\mathbf{\left\langle1\left|{\mathcal{M}_{A}^m}^\dagger\mathcal{M}_{A}^m\right|1\right\rangle-\left\langle
			2\left|{\mathcal{M}_{A}^m}^\dagger\mathcal{M}_{A}^m\right|2\right\rangle}=0$.
		Thus, $m^a_{11}=m^a_{22}$. For the states
		$\mathbf{|{\mathcal{\xi}}_{23}^{\pm}\rangle_{A}|1\rangle_{A}}$ we
		finally get $m^a_{11}=m^a_{22}=m^a_{33} .$ Therefore,
		${\mathcal{M}_{A}^m}^\dagger{\mathcal{M}_{A}^m}$
		$=\operatorname{diag}\left(\delta_{0}, \gamma, \gamma, \gamma\right) .$
		If possible let us assume that $\delta_{0}\neq0$ and
		$\gamma\neq0$. Then after Alice's measurement, Bob should do a
		nontrivial operation on his own system according to Alice's result. We
		denote $\mathcal{M}_{B}^m$ as Bob's operator. As we discussed above,
		by choosing suitable pair of states we can conclude that all the
		off-diagonal element of ${\mathcal{M}_{B}^m}^\dagger\mathcal{M}_{B}^m$
		is equal to 0. Similarly for the diagonal element as we have discussed
		above, if we take $\mathbf{
			|0\rangle_{A}|{\mathcal{X}}_{01}^{\pm}\rangle_{B},\;
			|0\rangle_{A}|{\mathcal{X}}_{23}^{\pm}\rangle_{B},\;|1\rangle_{A}|{\mathcal{X}}_{12}^{\pm}\rangle_{B}}$
		we finally get $m^b_{00}=m^b_{11}=m^b_{22}=m^b_{33}.$ Therefore
		${\mathcal{M}_{B}^m}^\dagger\mathcal{M}_{B}^m$ is propotional to the
		identity operator, i.e.,
		${\mathcal{M}_{B}^m}^\dagger\mathcal{M}_{B}^m=\lambda_{0}I$, which is
		trivial operator and this contradicts our assumption. So, either
		$\delta_{0}=0$ or $\gamma=0 $.
		Notice that this result also suggests that these states cannot
		be distinguished if Bob goes first. Now it is clear that if Alice goes
		first with a diagonal operator i.e., $\delta_{0}=\gamma=1$, then the
		above set of states cannot be distinguished. So, Alice has to do
		non-trivial measurement first and this only happens when any one of
		$\delta_{0}$, $\gamma$ not equal to zero. For that Alice only has two
		outcome measurement operators: $\quad
		{\mathcal{M}_{A}^1}^\dagger\mathcal{M}_{A}^1=\operatorname{diag}(1,0,0,0)$
		and ${\mathcal{M}_{A}^2}^\dagger\mathcal{M}_{A}^2$ =
		$I-{\mathcal{M}_{A}^1}^\dagger\mathcal{M}_{A}^1$
		$=\operatorname{diag}(0,1,1,1)$, see Fig.~\ref{fig44-1}. If `$1$'
		clicks, Bob is able to distinguish the left states by projecting onto
		$\mathbf{\left|0\pm1\right\rangle}$ and $\mathbf{\left|2\pm3\right\rangle}$. If `$2$'
		clicks, it isolates the remaining states. It is then Bob's turn to do
		measurement. Following the method we used above, we can similarly
		prove that Bob's measurement must be
		${\mathcal{M}_{B}^1}^\dagger\mathcal{M}_{B}^1=\operatorname{diag}(1,0,0,0)$
		and ${\mathcal{M}_{B}^2}^\dagger\mathcal{M}_{B}^2$
		$=\operatorname{diag}(0,1,1,1)$. The process will repeat a finite
		number of times and for each measurement outcomes for both parties the
		set $\mathcal{S}_1$ transforms only to a distinguishable set. This
		implies the fact that if the set is distinguishable (local) then for
		all possible nontrivial measurements, it is impossible to transform
		the set into an indistinguishable one. In other words, the set
		$\mathcal{S}_1$ is not activable through orthogonality-preserving
		LOCC. Hence we complete the proof.
	\end{proof}
	\par 	From the above a key point appears. The structure of the
	product states suggests, for local discrimination of these states
	local operations and two-way classical communication is necessary.
	Also notice that it is straightforward to generalize the structure
	given in Fig.~\ref{fig44}. We just have to keep adding additional
	layer of titles following the pattern. Furthermore, in qubit-qudit
	case such a class is quite obvious \cite{bandhyopadhyay201}. Clearly,
	the two-qudit construction given in this paper is nontrivial.
	\begin{figure}[h!]
		\centering
		\includegraphics[width=0.38\textwidth]{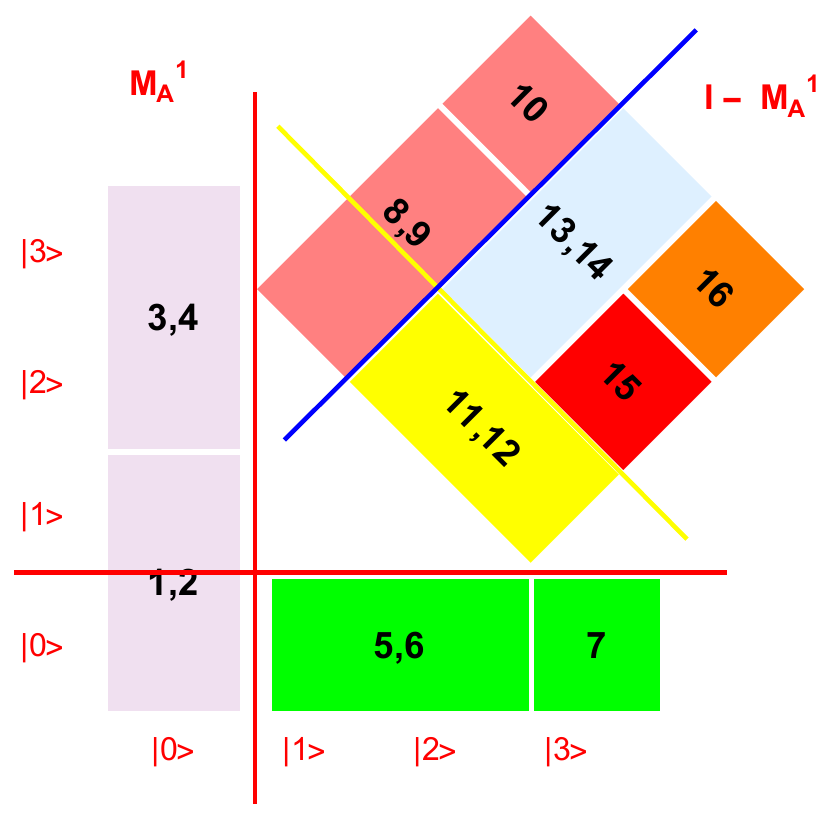}
		\caption{\emph{(Color online)} Representation of product states in
			${\mathbb{C}}^{4}\otimes{\mathbb{C}}^{4}$. Tile indices correspond to
			consecutively ordered basis states of set $\mathcal{S}_{1}$, while
			tile colors indicate compatible measurement setups for both parties.
			$\mbox{M}_i^j = {\mathcal{M}_i^j}^\dagger\mathcal{M}_i^j$; $i=\mbox{A,
				B}$, $j=1,2$ (this is also maintained in other figures unless
			explicitly stated).}\label{fig44-1}
	\end{figure}
	For bipartite systems the variation with respect to a hidden
	kind of non-locality is very limited, as $k$ can have only one value
	($k=1$). So, there are only two types of structures, one is activable
	and the other is non-activable. Both classes are weaker in the sense
	of non-locality because they are not locally indistinguishable class
	after all. Here we have provided only non-activable structure as it is 
    going to be useful further in the multipartite scenario. However, for 
    further research, one can also think about constructing activable scenarios 
    in two-ququad systems.
	
	Multipartite Hilbert spaces provide some interesting
	results which cannot be seen for bipartite Hilbert spaces. More
	generally, for the task of activation of non-locality the multipartite
	Hilbert space provides some broader view than bipartite cases.
	For example, in the tripartite scenario there exists a set of
	states $\mathcal{S}$, which is not activable when all three parties
	are spatially separated, i.e.,
	$\mathcal{H}^{\text{LOCC}}_{1}(\mathcal{S}) = 0$, but the same set of
	states might be activable when two parties perform some joint
	operation(s), i.e., $\mathcal{H}^{\text{LOCC}}_{2}(\mathcal{S})$ may
	not be zero \cite{subrata2024}. 	
	Consequently, a question arises: \emph{Is it feasible to construct a
		tripartite set for which the activation of non-locality by LOCC is
		precluded across every bipartition?} Such a class would, in essence,
	constitute the `worst case' from the perspective of non-locality
	activation. The subsequent findings furnish appropriate support for
	this aforementioned concept.
	Consider the set $\mathcal{S}_2
	\subseteq \mathbb{C}^4
	\otimes \mathbb{C}^2 \otimes \mathbb{C}^2$, given by, 
	
	\begin{multline}
		\mathcal{S}_2=
		\begin{Bmatrix}
			\mathbf{|\xi_0\rangle_{A}|{\mathcal{X}_0}\rangle_{B}|{\mathcal{Y}}_{01}^{\pm}\rangle_{C},}\;
			\mathbf{|\xi_0\rangle_{A}|{\mathcal{X}_1}\rangle_{B}|{\mathcal{Y}}_{01}^{\pm}\rangle_{C},}\;\\
			\mathbf{|{\mathcal{\xi}}_{12}^{\pm}\rangle_{A}|\mathcal{X}_0\rangle_{B}|\mathcal{Y}_0\rangle_{C},}\;
			\mathbf{|{\mathcal{\xi}}_{3}\rangle_{A}|\mathcal{X}_0\rangle_{B}|\mathcal{Y}_0\rangle_{C},}\;\\
			\mathbf{|\xi_1\rangle_{A}|{\mathcal{X}}_{1}\rangle_{B}|{\mathcal{Y}}_{0}\rangle_{C},}\;
			\mathbf{|{\mathcal{\xi}}_{3}\rangle_{A}|\mathcal{X}_1\rangle_{B}|\mathcal{Y}_0\rangle_{C}}\\
			\mathbf{|\xi_1\rangle_{A}|{\mathcal{X}}_{01}^{\pm}\rangle_{B}|\mathcal{Y}_1\rangle_{C},}\;
			\mathbf{|{\mathcal{\xi}}_{23}^{\pm}\rangle_{A}|\mathcal{X}_0\rangle_{B}|\mathcal{Y}_1\rangle_{C},}\;\\
			\mathbf{|{\mathcal{\xi}}_{3}\rangle_{A}|\mathcal{X}_1\rangle_{B}|\mathcal{Y}_1\rangle_{C},}\;	
		\end{Bmatrix}\label{2}
	\end{multline}
	where, $\mathbf{|{\mathcal{\xi}}_{ij}^{\pm}\rangle_{A}=\left(\frac{|i\rangle\pm |j\rangle}{\sqrt{2}}\right)_{A},\;|{\mathcal{X}}_{ij}^{\pm}\rangle_{B}=\left(\frac{|i\rangle\pm |j\rangle}{\sqrt{2}}\right)_{B}},\;$\\
	$
	\mathbf{|{\mathcal{Y}}_{ij}^{\pm}\rangle_{C}=\left(\frac{|i\rangle\pm |j\rangle}{\sqrt{2}}\right)_{C}}$ and $\mathbf{|{\mathcal{\xi}}_{k}\rangle_{A}=|k\rangle_{A},\;|{\mathcal{X}}_{k}\rangle_{B}=|k\rangle_{B}}$\\
	$\mathbf{|{\mathcal{Y}}_{k}\rangle_{C}=|k\rangle_{C}}$.

	\begin{prop}\label{prop2}
		
		The set $\mathcal{S}_2$ does not possess any activable non-locality in
		tripartition $A|B|C$ as well as in all bipartition under
		orthogonality-preserving LOCC, i.e.
		$(i)\; \mathcal{H}^{\text{LOCC}}_{1}(\mathcal{S}_2) = 0$ and $ (ii)\; \mathcal{H}^{\text{LOCC}}_{2}(\mathcal{S}_2) = 0.$
	\end{prop}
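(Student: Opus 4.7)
The plan is to imitate the argument used for $\mathcal{S}_1$ in every scenario. For each candidate first-round OPLM---by a single party in the tripartite case or by a joint party in each bipartition---I represent its POVM element $M^{\dagger}M$ as a matrix in the computational basis of that party's Hilbert space and extract two families of constraints from the pairwise orthogonality of $\mathcal{S}_2$: off-diagonal entries $m_{ij}$ vanish from pairs $|u_i\rangle|\alpha\rangle,\,|u_j\rangle|\beta\rangle$ whose complementary factors have non-zero overlap, and diagonal equalities $m_{ii}=m_{jj}$ come from pairs of the form $|u_i\pm u_j\rangle|\alpha\rangle$ with identical complementary factor. The target in each scenario is to show that every admissible first-round measurement is either proportional to the identity (hence trivial) or a rank-one projection that reduces $\mathcal{S}_2$ to a distinguishable subset, iterated until completion.

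For part (i), take Alice to go first and write $M_A^{\dagger}M_A=[m^a_{ij}]_{4\times 4}$. A pair such as $\{|\xi_0\rangle_A|\mathcal{X}_1\rangle_B|\mathcal{Y}_{01}^{+}\rangle_C,\,|\xi_1\rangle_A|\mathcal{X}_1\rangle_B|\mathcal{Y}_0\rangle_C\}$ (with $BC$-overlap $1/\sqrt{2}$) kills $m^a_{01}$, and analogous pairs involving $|\xi_{23}^{\pm}\rangle_A$ and $|\xi_3\rangle_A$ dispose of the remaining off-diagonals. The superpositions $|\xi_{12}^{\pm}\rangle_A|\mathcal{X}_0\rangle_B|\mathcal{Y}_0\rangle_C$ and $|\xi_{23}^{\pm}\rangle_A|\mathcal{X}_0\rangle_B|\mathcal{Y}_1\rangle_C$ then force $m^a_{11}=m^a_{22}=m^a_{33}$, so $M_A^{\dagger}M_A=\operatorname{diag}(\delta_0,\gamma,\gamma,\gamma)$, exactly as for $\mathcal{S}_1$. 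If Bob were instead to initiate, the pair $\{|\xi_0\rangle_A|\mathcal{X}_0\rangle_B|\mathcal{Y}_{01}^{+}\rangle_C,\,|\xi_0\rangle_A|\mathcal{X}_1\rangle_B|\mathcal{Y}_{01}^{+}\rangle_C\}$ kills the only off-diagonal of $M_B^{\dagger}M_B$ and the pair $|\xi_1\rangle_A|\mathcal{X}_{01}^{\pm}\rangle_B|\mathcal{Y}_1\rangle_C$ equates its diagonals, giving $M_B^{\dagger}M_B\propto I_B$; symmetrically $M_C^{\dagger}M_C\propto I_C$. Hence Alice must initiate with $\operatorname{diag}(1,0,0,0)$ or its complement $\operatorname{diag}(0,1,1,1)$, and each branch leaves a strictly smaller subset on which the same analysis recurs and terminates in distinguishable pieces. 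This yields $\mathcal{H}^{\mathrm{LOCC}}_{1}(\mathcal{S}_2)=0$.

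For part (ii), each of the three bipartitions is handled separately. In $A|BC$, with $BC\cong\mathbb{C}^4$, I let $M_{BC}^{\dagger}M_{BC}$ be a $4\times 4$ matrix in the basis $|\mathcal{X}_b\rangle_B|\mathcal{Y}_c\rangle_C$; off-diagonals are killed using pairs that share a common $|\xi_3\rangle_A$ factor (Alice-overlap $1$) together with pairs such as $\{|\xi_{23}^{+}\rangle_A|\mathcal{X}_0\rangle_B|\mathcal{Y}_1\rangle_C,\,|\xi_3\rangle_A|\mathcal{X}_1\rangle_B|\mathcal{Y}_0\rangle_C\}$, and diagonal equalities follow from the joint superpositions $|\xi_0\rangle_A|\mathcal{X}_b\rangle_B|\mathcal{Y}_{01}^{\pm}\rangle_C$ and $|\xi_1\rangle_A|\mathcal{X}_{01}^{\pm}\rangle_B|\mathcal{Y}_1\rangle_C$, giving $M_{BC}^{\dagger}M_{BC}\propto I_{BC}$. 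Alice must therefore initiate with the same two-outcome projection as in part (i), and $BC$ (acting as a single party) distinguishes the remainder. For $B|AC$, the $B$-only computation from part (i) already gives $M_B^{\dagger}M_B\propto I_B$, so $AC$ must move first; I then repeat the two-step extraction on the $8\times 8$ matrix $M_{AC}^{\dagger}M_{AC}$, using pairs sharing a common $|\mathcal{X}_0\rangle_B$ factor (present for every $A$-label in $\mathcal{S}_2$) to kill off-diagonals, and the $|\xi_{12}^{\pm}\rangle_A,\,|\xi_{23}^{\pm}\rangle_A$ and $|\mathcal{Y}_{01}^{\pm}\rangle_C$ superpositions to chain diagonals into the same rank-one-versus-complement dichotomy. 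The bipartition $C|AB$ is treated symmetrically.

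The most delicate step is the $M_{AC}^{\dagger}M_{AC}$ analysis in $B|AC$: the $8\times 8$ matrix has many cross-sector off-diagonals, and because the $B$-factors available for chaining are restricted to $|\mathcal{X}_0\rangle,|\mathcal{X}_1\rangle,|\mathcal{X}_{01}^{\pm}\rangle$, the diagonal equalities must be propagated along a carefully chosen spanning collection of pairs of $\mathcal{S}_2$. My plan is to use the ``hub'' states $|\xi_3\rangle_A|\mathcal{X}_0\rangle_B|\mathcal{Y}_0\rangle_C$, $|\xi_3\rangle_A|\mathcal{X}_1\rangle_B|\mathcal{Y}_0\rangle_C$, and $|\xi_3\rangle_A|\mathcal{X}_1\rangle_B|\mathcal{Y}_1\rangle_C$ as connectors on the $A$-sector and combine them with the $|\mathcal{X}_{01}^{\pm}\rangle_B$ and $|\mathcal{Y}_{01}^{\pm}\rangle_C$ superpositions to link all eight $AC$-basis indices, forcing the same trivial-or-projection dichotomy as in the other cases and thereby closing the proof.
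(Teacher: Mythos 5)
Your part (i) and your $A|BC$ analysis are sound and essentially reproduce the paper's Proposition~1 machinery; the paper itself is more economical (it obtains (i) as an immediate consequence of (ii), since merging parties only enlarges the class of allowed operations, and it handles $A|BC$ by noting that $\mathcal{S}_2$ in that cut has exactly the form of $\mathcal{S}_1$ and citing Proposition~1). The genuine gap is your plan for $B|AC$ and $C|AB$. The dichotomy you aim to force on $E=M_{AC}^{\dagger}M_{AC}$ --- trivial or rank-one projection --- is simply false, and no choice of ``hub'' states can rescue it. Since $B$ is only a qubit, diagonal equalities for $E$ can only come from pairs of states with the \emph{same} $B$-factor whose $AC$-factors have the form $|u\pm v\rangle$; in $\mathcal{S}_2$ these are $|\xi_0\rangle_A|\mathcal{Y}_{01}^{\pm}\rangle_C$ (with $B$-factor $|\mathcal{X}_0\rangle$, and again with $|\mathcal{X}_1\rangle$), $|\xi_{12}^{\pm}\rangle_A|\mathcal{Y}_0\rangle_C$, and $|\xi_{23}^{\pm}\rangle_A|\mathcal{Y}_1\rangle_C$, giving only the three disconnected identifications $\langle 0,0|E|0,0\rangle=\langle 0,1|E|0,1\rangle$, $\langle 1,0|E|1,0\rangle=\langle 2,0|E|2,0\rangle$, and $\langle 2,1|E|2,1\rangle=\langle 3,1|E|3,1\rangle$, while the diagonal entries at $(1,1)$ and $(3,0)$ are tied to nothing (the pair $|\xi_1\rangle_A|\mathcal{X}_{01}^{\pm}\rangle_B|\mathcal{Y}_1\rangle_C$ has orthogonal $B$-factors and yields no constraint at all). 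Consequently $E=\operatorname{diag}(\delta_0,\gamma,\gamma,\gamma)_A\otimes I_C$ with $\delta_0\neq\gamma$ (nontrivial, full rank, not a projection), and also the rank-one projector $|\xi_0\rangle\langle\xi_0|_A\otimes|\mathcal{Y}_{01}^{+}\rangle\langle\mathcal{Y}_{01}^{+}|_C$ onto a superposition, satisfy every orthogonality constraint and extend to valid orthogonality-preserving POVMs: the joint party $AC$ has a continuum of admissible nontrivial measurements, so a Proposition-1-style exhaustion of first-round measurements cannot close.

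What is missing is the structural argument the paper actually uses for these two cuts: in $B|AC$ and $C|AB$ the members of $\mathcal{S}_2$ are orthogonal product states of $\mathbf{C}^2\otimes\mathbf{C}^8$; LOCC cannot create entanglement, so every branch of every orthogonality-preserving protocol again yields orthogonal product states in $\mathbf{C}^2\otimes\mathbf{C}^8$; and any set of orthogonal product states in $\mathbf{C}^2\otimes\mathbf{C}^n$ is locally distinguishable. Hence no activation is possible in these cuts, with no POVM analysis whatsoever. Unless you replace your $M_{AC}$ analysis by this argument (or by a genuinely exhaustive proof that every branch of every $AC$-side protocol leaves a distinguishable set, which your claimed dichotomy does not provide), part (ii) of your proof does not go through.
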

	\begin{proof} (i) We begin by noting that given any
		multipartite set, if it does not contain any activable non-locality
		across all bi-partitions then it becomes obvious that the set also does
		not contain any activable non-locality in multi-partitions. This is
		because in bi-partitions the operations are stronger than that of the
		multi-partitions. For example, in our context if we consider a
		bipartition then, two parties can perform joint measurements but in a
		tri-partition such a possibility is absent. Clearly, the operations in
		bi-partitions can be much stronger than that of a tri-partition. Thus,
		we concentrate only on proving the second part of the proposition. 
		
		(ii) To prove that $\mathcal{S}_2$ is not activable in
		all bi-partitions, consider the case A|BC. Here the states belong to a
		$\mathbb{C}^4 \otimes \mathbb{C}^4$ and they
		have the same forms as the states of the set $\mathcal{S}_1$. So by
		Proposition \ref{prop1}, the set of states $\mathcal{S}_2$ is non-activable in
		A|BC bi-partition.
		\par  For the bipartitions B|AC and C|AB, the states of the
		set $\mathcal{S}_2$ belong to $\mathbb{C}^2 \otimes
		\mathbb{C}^8$. Now, it is known that a set of product states in
		$\mathbb{C}^2 \otimes \mathbb{C}^n$ is always locally
		distinguishable \cite{BennettUPB1999}. Moreover, LOCC is not
		sufficient to create entanglement from product states. So, it is
		impossible to activate non-locality from the states of the set
		$\mathcal{S}_2$ in B|AC and C|AB bipartitions. Hence
		$\mathcal{H}^{\text{LOCC}}_{2}(\mathcal{S}_2) = 0$.
	\end{proof}
	\begin{rem} \label{rem3}
		Let us not concentrate on the particular structure of tripartite
		product states, given in (\ref{2}). Instead, we consider any
		tripartite orthogonal product states in $\mathbb{C}^4 \otimes
		\mathbb{C}^2 \otimes \mathbb{C}^2$ such that these states mimic the
		similar forms like the states of (\ref{1}) in $\mathbb{C}^4
		\otimes \mathbb{C}^4$ bipartition. Then, from the aforesaid proof
		technique it depicts that for such a set the activable non-locality is
		0 across all bi-partitions.
	\end{rem}
	\par 	In a three-qubit system, it is observed that all sets of
	orthogonal product states are non-activable across every bipartition.
	This deduction stems directly from the established fact that, in
	qubit-qudit scenarios, orthogonal product states consistently exhibit
	local distinguishability \cite{BennettUPB1999}. Consequently, from
	this standpoint, our proposed higher-dimensional construction presents
	a point of particular interest.

	\par 		Next, we want to discuss about a hierarchy among the
	multipartite locally distinguishable sets. For this purpose, we first
	consider the bipartite set $\mathcal{S}_3
	=\left\{\left|\phi_i\right\rangle_{A B}\right\}_{i=1}^{10}\in
	\mathbb{C}^6 \otimes \mathbb{C}^6$, where
	\begin{multline}
		\begin{aligned}
			& \left|\phi_1\right\rangle_{A B}=|\mathbf{0}\rangle_A|\mathbf{0}-\mathbf{1}+\mathbf{4}-\mathbf{5}\rangle_B \\
			& \left|\phi_2\right\rangle_{A B}=|\mathbf{2}\rangle_A|\mathbf{1}-\mathbf{2}+\mathbf{5}-\mathbf{3}\rangle_B \\
			& \left|\phi_3\right\rangle_{A B}=|\mathbf{1-2}\rangle_A|\mathbf{0}-\mathbf{4}\rangle_B \\
			& \left|\phi_4\right\rangle_{A B}=|\mathbf{0-1}\rangle_A|\mathbf{2}-\mathbf{3}\rangle_B \\
			& \left|\phi_5\right\rangle_{A B}=|\mathbf{0+1+2}\rangle_A|\mathbf{0}+\mathbf{1}+\mathbf{2}+\mathbf{3}+\mathbf{4}+\mathbf{5}\rangle_B \\
			& \left|\phi_6\right\rangle_{A B}=|\mathbf{3}\rangle_A|\mathbf{0}-\mathbf{1}+\mathbf{4}-\mathbf{5}\rangle_B \\
			& \left|\phi_7\right\rangle_{A B}=|\mathbf{5}\rangle_A|\mathbf{1}-\mathbf{2}+\mathbf{5}-\mathbf{3}\rangle_B \\
			& \left|\phi_8\right\rangle_{A B}=|\mathbf{4-5}\rangle_A|\mathbf{0}-\mathbf{4}\rangle_B \\
			& \left|\phi_9\right\rangle_{A B}=|\mathbf{3-4}\rangle_A|\mathbf{2}-\mathbf{3}\rangle_B \\
			& \left|\phi_{10}\right\rangle_{A B}=|\mathbf{3+4+5}\rangle_A|\mathbf{0}+\mathbf{1}+\mathbf{2}+\mathbf{3}+\mathbf{4}+\mathbf{5}\rangle_B
		\end{aligned}
		\label{5}
	\end{multline}
	It is quite straightforward to show that the set $\mathcal{S}_3$
	considered above is free from local redundancy
	\cite{bandhyopadhyay201,Li2022,subrata2024}. Here, Bob's system can be
	considered to be the composition of qubit and qutrit subsystems, 
	\begin{align*}
		|\mathbf{0}\rangle_B&:=|00\rangle_{b_1 b_2}, &&|\mathbf{1}\rangle_B:=|01\rangle_{b_1 b_2},\\
		|\mathbf{2}\rangle_B&:=|02\rangle_{b_1 b_2}, &&|\mathbf{3}\rangle_B:=|10\rangle_{b_1 b_2},\\
		\left|\mathbf{4}_B\right\rangle&:=|11\rangle_{b_1 b_2},&& |\mathbf{5}\rangle_B:=|12\rangle_{b_1 b_2}. 
	\end{align*}
	Take two states, $\left|\phi_3\right\rangle_{A B}$ and
	$\left|\phi_4\right\rangle_{A B}$. When any of the subparts (qubit or
	qutrit) of Bob's system for both states is discarded the reduced
	states will be nonorthogonal. Similar things happen for Alice also.
	This implies the set $\mathcal{S}_3$ is free from local redundancy.
	\par Now we will show that the set $\mathcal{S}_3$ is locally
	distinguishable. The players can avail the following discrimination
	protocol. First Bob performs a measurement: 	
	\begin{align*}
		\mathcal{M}_B \equiv& \{\mathcal{M}_B^1:=P\left[|\mathbf{0}-\mathbf{4}\rangle_B\right],  \mathcal{M}_B^2 :=P\left[|\mathbf{2}-\mathbf{3}\rangle_B\right], \\
		& \mathcal{M}_B^3:=P\left[\ket{ \mathbf{0}+\mathbf{1}+\mathbf{2}+\mathbf{3}+\mathbf{4} +\mathbf{5}}_B\right],\\
		& \mathcal{M}_B^4:=\mathbb{I}-\left(\mathcal{M}_B^1+\mathcal{M}_B^2+\mathcal{M}_B^3\right) \}.
	\end{align*} 
	\noindent	Here, $P\left[|\cdot\rangle\right]:=$
	$|\cdot\rangle\langle\cdot|_{\mathcal{P}}$, and $\mathcal{P}$ denotes
	the party. When $\mathcal{M}_B^1$ clicks, the given state must be
	$\left|\phi_3\right\rangle$ and $\left|\phi_8\right\rangle$, which can
	be distinguished by Alice, projecting onto $|\mathbf{1-2}\rangle$ and
	$|\mathbf{4-5}\rangle$. Similarly, for the click $\mathcal{M}_B^2$,
	the states are $\left|\phi_4\right\rangle$ and
	$\left|\phi_9\right\rangle$, which can be distinguished by Alice,
	projecting onto $|\mathbf{0-1}\rangle$ and $|\mathbf{3-4}\rangle$.
	Also for the outcome $\mathcal{M}_B^3$ the isolated states are
	$\left|\phi_5\right\rangle$ and $\left|\phi_{10}\right\rangle$, which
	can be distinguished by Alice, projecting onto
	$|\mathbf{0+1+2}\rangle$ and $|\mathbf{3+4+5}\rangle$. Whenever
	$\mathcal{M}_B^4$ clicks the given state can be
	$\left|\phi_1\right\rangle$, $\left|\phi_2\right\rangle$,
	$\left|\phi_6\right\rangle$ and $\left|\phi_7\right\rangle$. However,
	in that case, Alice can perform a measurement 
	\begin{align*}
		\mathcal{M}_A &\equiv\left\{\mathcal{M}_A^1:=\right. P\left[|\mathbf{0}\rangle_A\right], \mathcal{M}_A^2 :=P\left[|\mathbf{2}\rangle_A\right],\\
		& \mathcal{M}_A^3:=P[\ket{\mathbf{3}}_A], 
		\mathcal{M}_A^4:=\mathbb{I}-\left(\mathcal{M}_A^1+\mathcal{M}_A^2+\mathcal{M}_A^3\right)\},
	\end{align*}
	to distinguish between these four states. This concludes the
	local discrimination protocol for the set $\mathcal{S}_3$. In the
	following, we will demonstrate a protocol to activate non-locality
	without entanglement from this set.
	\begin{prop}
		The set $\mathcal{S}_3$ is a locally distinguishable set and
		can be transformed deterministically to a locally irreducible set via
		orthogonality-preserving LOCC.    
	\end{prop}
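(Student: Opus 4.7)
The local-distinguishability half of the statement is already witnessed by the explicit discrimination protocol given immediately before Proposition 3, so the real task is to exhibit a deterministic orthogonality-preserving LOCC that converts $\mathcal{S}_3$ into a locally irreducible set. The decisive structural observation is that Alice's six basis kets split cleanly: $|\phi_1\rangle,\ldots,|\phi_5\rangle$ are supported on $\mathrm{span}\{|\mathbf{0}\rangle,|\mathbf{1}\rangle,|\mathbf{2}\rangle\}_A$ while $|\phi_6\rangle,\ldots,|\phi_{10}\rangle$ live in the orthogonal complement $\mathrm{span}\{|\mathbf{3}\rangle,|\mathbf{4}\rangle,|\mathbf{5}\rangle\}_A$. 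The proposed protocol is therefore Alice's two-outcome projective measurement onto these two three-dimensional subspaces.

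First I would verify that this measurement is orthogonality-preserving: this is immediate because every state of $\mathcal{S}_3$ lies entirely in one outcome, so the projection neither mixes nor destroys orthogonalities already present within each branch. The relabelling symmetry $|\mathbf{i}\rangle_A \leftrightarrow |\mathbf{i+3}\rangle_A$ maps one branch onto the other, so it is enough to analyse the first branch, namely $\mathcal{S}_3^{(1)} := \{|\phi_1\rangle,\ldots,|\phi_5\rangle\}$.

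Next I would exploit the given composition $B=b_1\otimes b_2$ with $b_1$ a qubit and $b_2$ a qutrit, and work in the tripartition $A|b_1|b_2$. Rewriting each Bob-side ket in the $b_1b_2$ basis, the states $|\phi_1\rangle,\ldots,|\phi_4\rangle$ turn out to be entangled across $b_1|b_2$, whereas $|\phi_5\rangle_B = (|0\rangle+|1\rangle)_{b_1}(|0\rangle+|1\rangle+|2\rangle)_{b_2}$ is fully factorised on Bob. To prove local irreducibility of $\mathcal{S}_3^{(1)}$, I would assume that one of the three parties performs a nontrivial OPLM with element $M_k$ and, mimicking the argument in the proof of Proposition 1, expand $M_k^\dagger M_k=[m_{ij}]$ in the computational basis. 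Writing down $\langle\phi_i|(M_k^\dagger M_k)\otimes I|\phi_j\rangle=0$ for each ordered pair $i\neq j$ should successively force all off-diagonal entries to vanish, and then the pairs involving the superposition states $|\phi_3\rangle,|\phi_4\rangle,|\phi_5\rangle$ will equalise the diagonals, yielding $M_k^\dagger M_k \propto I$ and hence a trivial measurement, i.e.\ no state of $\mathcal{S}_3^{(1)}$ can be eliminated.

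The main technical obstacle will be the analysis on Bob's side, where the constraints from $b_1$ and $b_2$ are coupled through the entanglement of $|\phi_1\rangle$–$|\phi_4\rangle$: one cannot read off the vanishing of off-diagonal entries as cleanly as in the purely bipartite product setting of Proposition 1. I expect that using $|\phi_5\rangle$ as a \emph{fully connected} reference state, whose equal-weight superposition links every computational index on Bob, will permit the coupled orthogonality equations to be decoupled and drive $[m_{ij}]$ into the identity form. Once local irreducibility of $\mathcal{S}_3^{(1)}$ (and, by the symmetry above, of $\mathcal{S}_3^{(2)}:=\{|\phi_6\rangle,\ldots,|\phi_{10}\rangle\}$) is established, the deterministic OP-LOCC conversion demanded by the proposition follows at once from Alice's initial two-outcome branching.
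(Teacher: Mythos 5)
Your protocol stops after Alice's two\hyphenation{two}-outcome projection, and that is where the argument breaks: the resulting branch sets, which live in $\mathbf{C}^3_A\otimes\mathbf{C}^6_B$, are \emph{not} locally irreducible, so no amount of technical work along the lines you sketch can close the gap. Concretely, take your first branch $\mathcal{S}_3^{(1)}$ and let $|u\rangle_B=|\mathbf{0}\rangle+|\mathbf{1}\rangle+|\mathbf{2}\rangle+|\mathbf{3}\rangle+|\mathbf{4}\rangle+|\mathbf{5}\rangle$. The Bob parts of $|\phi_1\rangle,\dots,|\phi_4\rangle$, namely $|\mathbf{0}-\mathbf{1}+\mathbf{4}-\mathbf{5}\rangle$, $|\mathbf{1}-\mathbf{2}+\mathbf{5}-\mathbf{3}\rangle$, $|\mathbf{0}-\mathbf{4}\rangle$, $|\mathbf{2}-\mathbf{3}\rangle$, are all orthogonal to $|u\rangle$, while the Bob part of $|\phi_5\rangle$ \emph{is} $|u\rangle$. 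Hence Bob's projective measurement $\left\{P_u,\, \mathbb{I}-P_u\right\}$, with $P_u=\tfrac{1}{6}|u\rangle\langle u|$, is orthogonality-preserving (outcome $P_u$ keeps only $|\phi_5\rangle$; outcome $\mathbb{I}-P_u$ leaves $|\phi_1\rangle,\dots,|\phi_4\rangle$ untouched and annihilates $|\phi_5\rangle$), it is manifestly nontrivial, and each outcome eliminates states. So $\mathcal{S}_3^{(1)}$ is locally \emph{reducible}. Relatedly, your plan to mimic Proposition 1 and force $M_k^\dagger M_k\propto \mathbb{I}$ cannot succeed even in principle, because Bob's coarse projection onto $\mathrm{span}\{|\mathbf{0}\rangle,|\mathbf{1}\rangle,|\mathbf{2}\rangle\}_B$ versus $\mathrm{span}\{|\mathbf{3}\rangle,|\mathbf{4}\rangle,|\mathbf{5}\rangle\}_B$ is itself a nontrivial OPLM on your intermediate set. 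This is precisely why the paper's protocol has \emph{both} parties perform the $\{012\}/\{345\}$ projection: after Bob's and Alice's projections, every one of the four branches is exactly the $3\times 3$ Tiles UPB of Bennett et al., whose local irreducibility is an established result, and the conversion is deterministic because every outcome lands on such a UPB.

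A secondary problem is your move to the tripartition $A|b_1|b_2$. The proposition concerns the bipartite set $\mathcal{S}_3\subset\mathbf{C}^6\otimes\mathbf{C}^6$, so local irreducibility must hold against arbitrary measurements on Bob's full six-dimensional system, including measurements entangled across $b_1 b_2$; irreducibility against $b_1$ and $b_2$ acting separately is a strictly weaker statement (and indeed the elimination measurement $\{P_u,\mathbb{I}-P_u\}$ above uses a rank-one element that a single party $b_1$ or $b_2$ could not realize as part of an OPLM). In the paper, the $b_1\otimes b_2$ factorization is invoked only to verify that $\mathcal{S}_3$ is free from local redundancy, not to redefine the parties for the irreducibility claim.
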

	\begin{proof} 
		Consider that Bob performs a local measurement 
		\begin{align*}
			\mathcal{K}_B &\equiv \left\{\mathcal{K}^B_1:=P\left[(|\mathbf{0}\rangle,|\mathbf{1}\rangle,|\mathbf{2}\rangle)_B\right], \mathcal{K}^B_2:=P\left[(|\mathbf{3}\rangle,|\mathbf{4}\rangle, |\mathbf{5}\rangle)_B\right]\right\},\\
			&P\left[(|i\rangle,|j\rangle,\dots)_{\mathcal{P}}\right] = \left[(|i\rangle\langle i|+|j\rangle\langle j|+\dots)_{\mathcal{P}}\right],
		\end{align*}
		$\mathcal{P}$ stands for party. If $\mathcal{K}^B_1$ clicks, they end up with,
		\[
		\left\{\begin{array}{c}
			|\mathbf{0}\rangle_A|\mathbf{0}-\mathbf{1}\rangle_B,|\mathbf{2}\rangle_A|\mathbf{1}-\mathbf{2}\rangle_B, \\
			|\mathbf{1-2}\rangle_A|\mathbf{0}\rangle_B,|\mathbf{0-1}\rangle_A|\mathbf{2}\rangle_B, \\
			|\mathbf{0+1+2}\rangle_A|\mathbf{0}+\mathbf{1}+\mathbf{2}\rangle_B,\\
			|\mathbf{3}\rangle_A|\mathbf{0}-\mathbf{1}\rangle_B,|\mathbf{5}\rangle_A|\mathbf{1}-\mathbf{2}\rangle_B, \\
			|\mathbf{4-5}\rangle_A|\mathbf{0}\rangle_B,|\mathbf{3-4}\rangle_A|\mathbf{2}\rangle_B, \\
			|\mathbf{3+4+5}\rangle_A|\mathbf{0}+\mathbf{1}+\mathbf{2}\rangle_B
		\end{array}\right\}
		\]
		After that Alice makes measurement  $\mathcal{K}_A \equiv\left\{\mathcal{K}^A_1:=P\left[(|\mathbf{0}\rangle,|\mathbf{1}\rangle,|\mathbf{2}\rangle)_A\right], \mathcal{K}^A_2:=\right.$ $\left.P\left[(|\mathbf{3}\rangle,|\mathbf{4}\rangle,|\mathbf{5}\rangle)_A\right]\right\}$. If $\mathcal{K}^A_1$ occurs, it gives,
		\[
		\left\{\begin{array}{c}
			|\mathbf{0}\rangle_A|\mathbf{0}-\mathbf{1}\rangle_B,|\mathbf{2}\rangle_A|\mathbf{1}-\mathbf{2}\rangle_B, \\
			|\mathbf{1-2}\rangle_A|\mathbf{0}\rangle_B,|\mathbf{0-1}\rangle_A|\mathbf{2}\rangle_B, \\
			|\mathbf{0+1+2}\rangle_A|\mathbf{0}+\mathbf{1}+\mathbf{2}\rangle_B
		\end{array}\right\}
		\]
		which is a locally irreducible set \cite{BennettUPB1999}. If $\mathcal{K}^A_2$ occurs, it also gives a locally irreducible set,
		\[ 
		\left\{\begin{array}{c}
			|\mathbf{3}\rangle_A|\mathbf{0}-\mathbf{1}\rangle_B,|\mathbf{5}\rangle_A|\mathbf{1}-\mathbf{2}\rangle_B, \\
			|\mathbf{4-5}\rangle_A|\mathbf{0}\rangle_B,|\mathbf{3-4}\rangle_A|\mathbf{2}\rangle_B, \\
			|\mathbf{3+4+5}\rangle_A|\mathbf{0}+\mathbf{1}+\mathbf{2}\rangle_B
		\end{array}\right\}.
		\]
		On the other hand, if Bob gets $\mathcal{K}^B_2$, they are then left with the following states,
		\[	\left\{\begin{array}{c}
			|\mathbf{0}\rangle_A|\mathbf{4}-\mathbf{5}\rangle_B,|\mathbf{2}\rangle_A|\mathbf{5}-\mathbf{3}\rangle_B, \\
			|\mathbf{1-2}\rangle_A|\mathbf{4}\rangle_B,|\mathbf{0-1}\rangle_A|\mathbf{3}\rangle_B, \\
			|\mathbf{0+1+2}\rangle_A|\mathbf{3}+\mathbf{4}+\mathbf{5}\rangle_B,\\
			|\mathbf{3}\rangle_A|\mathbf{4}-\mathbf{5}\rangle_B,|\mathbf{5}\rangle_A|\mathbf{5}-\mathbf{3}\rangle_B, \\
			|\mathbf{4-5}\rangle_A|\mathbf{4}\rangle_B,|\mathbf{3-4}\rangle_A|\mathbf{3}\rangle_B, \\
			|\mathbf{3+4+5}\rangle_A|\mathbf{3}+\mathbf{4}+\mathbf{5}\rangle_B
		\end{array}\right\}.
		\]
		After that Alice makes a measurement  
		\begin{align*}
			\mathcal{K}_A \equiv&\left\{\mathcal{K}^A_1:=P\left[(|\mathbf{0}\rangle,|\mathbf{1}\rangle,|\mathbf{2}\rangle)_A\right] \right., \\
			& \left. \mathcal{K}^A_2:=P\left[(|\mathbf{3}\rangle,|\mathbf{4}\rangle,|\mathbf{5}\rangle)_A\right]\right\}.
		\end{align*}
		If $\mathcal{K}^A_1$ occurs, it gives,
		\[ 
		\left\{\begin{array}{c}
			|\mathbf{0}\rangle_A|\mathbf{4}-\mathbf{5}\rangle_B,|\mathbf{2}\rangle_A|\mathbf{5}-\mathbf{3}\rangle_B, \\
			|\mathbf{1-2}\rangle_A|\mathbf{4}\rangle_B,|\mathbf{0-1}\rangle_A|\mathbf{3}\rangle_B, \\
			|\mathbf{0+1+2}\rangle_A|\mathbf{3}+\mathbf{4}+\mathbf{5}\rangle_B
		\end{array}\right\},
		\]
		which is a locally irreducible set. If $\mathcal{K}^A_2$ occurs, it also gives a locally irreducible set,
		\[ 
		\left\{\begin{array}{c}
			|\mathbf{3}\rangle_A|\mathbf{4}-\mathbf{5}\rangle_B,|\mathbf{5}\rangle_A|\mathbf{5}-\mathbf{3}\rangle_B, \\
			|\mathbf{4-5}\rangle_A|\mathbf{4}\rangle_B,|\mathbf{3-4}\rangle_A|\mathbf{3}\rangle_B, \\
			|\mathbf{3+4+5}\rangle_A|\mathbf{3}+\mathbf{4}+\mathbf{5}\rangle_B
		\end{array}\right\}.
		\]
		It is evident that, for each instance of Alice's measurement,
		specified by the set  $\{\mathcal{K}^A_1,\;\mathcal{K}^A_2\}$; the
		five post-measurement states, contingent upon  $\mathcal{K}^B_1$
		clicking, yields the celebrated unextendable product basis (UPB)
		\cite{BennettPB1999,BennettUPB1999} in $\mathbb{C}^3 \otimes \mathbb{C}^3$.
		Also, the post-measurement states for each case of Alice's measurement
		$\mathcal{K}^A_1,\;\mathcal{K}^A_2$ when $\mathcal{K}^B_2$ clicks form
		the same UPB. See Fig.~\ref{f44-3}. It has been well established that
		UPB is locally indistinguishable \cite{Divincinzo,BennettUPB1999}. So,
		the set $\mathcal{S}_3$ is activable by orthogonality-preserving LOCC.
		Hence, this completes the proof.
	\end{proof}
	\begin{figure}[h!]
		\centering
		\includegraphics[width=0.48\textwidth]{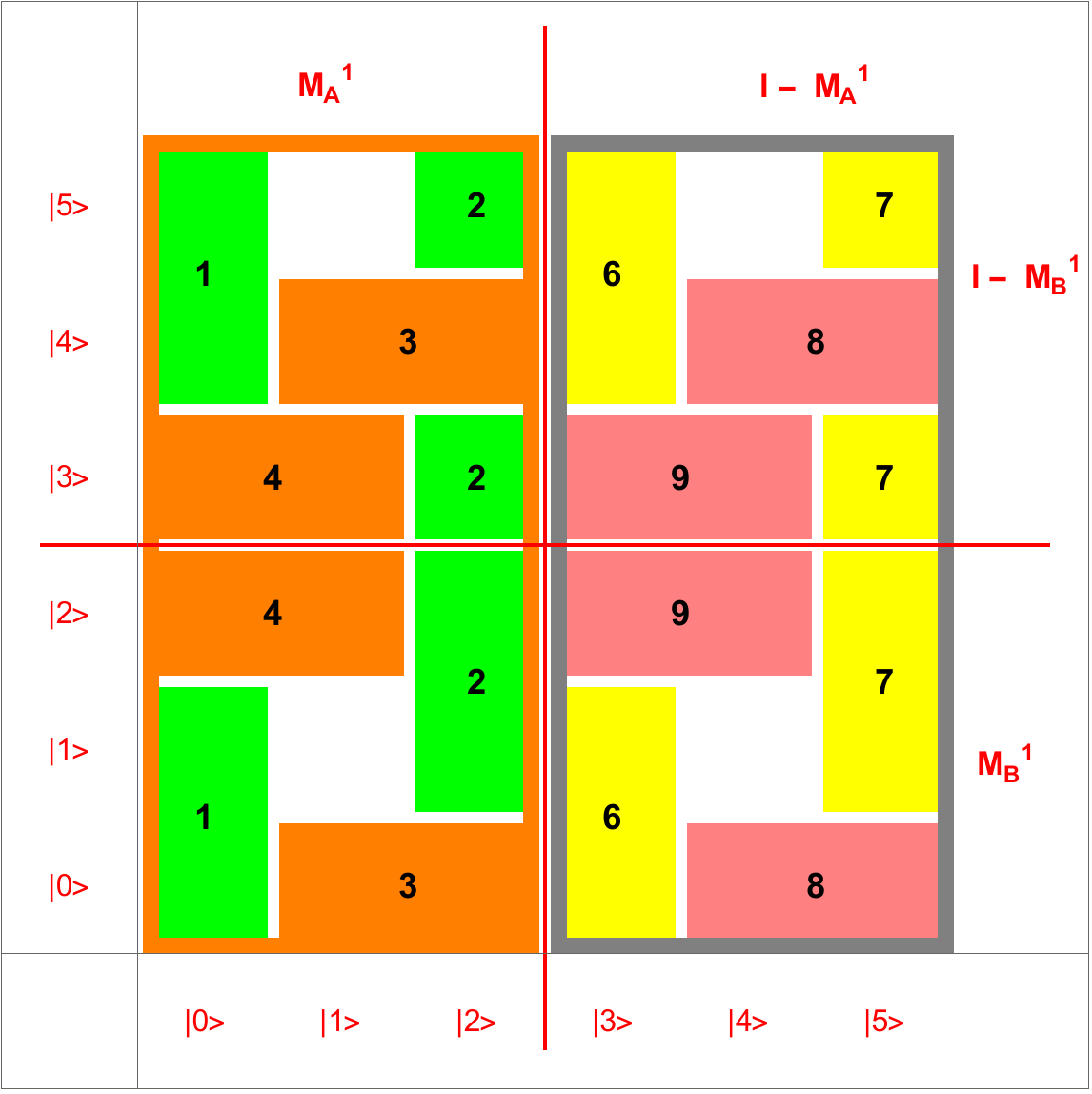}
		\caption{\emph{(Color online)} Tiling diagram for the states in \( \mathbf{\mathcal{S}_3} \). The outlined region indicates the support of Alice's and Bob's measurement outcomes, 
			resulting in post-measurement states, 
			contained in a UPB subspace. $\mbox{M}_i^j = {\mathcal{K}_j^i}$; $i=\mbox{A, B}$, $j=1,2$}\label{f44-3}
	\end{figure}

	{\it Towards a hierarchy}: We consider the set $\mathcal{S}_4 =
	\left\{\left|\phi_i\right\rangle_{AB} \otimes \left|\mathbf{0}\right\rangle_C,
	\left|\phi_i\right\rangle_{AB} \otimes \left|\mathbf{1}\right\rangle_C
	\right\}_{i=1}^{10} \in \mathbb{C}^6 \otimes \mathbb{C}^6 \otimes
	\mathbb{C}^2$, where
	$\left\{\left|\phi_i\right\rangle_{AB}\right\}_{i=1}^{10} =
	\mathcal{S}_3$. 
	Now, consider all bipartitions of the tripartite system. For the
	bipartition $AB|C$, the total Hilbert space is $\mathbb{C}^{36}\otimes
	\mathbb{C}^{2}$, and due to the limited dimension of subsystem $C$,
	the set remains non-activable in this cut \cite{bandhyopadhyay201}.
	However, for the bipartitions $A|BC$ and $B|AC$, the set becomes
	activable. This follows directly from Proposition 3. 
	
	\begin{rem} 
		Let us now highlight the contrasting structures of the sets
		$\mathcal{S}_2$ and $\mathcal{S}_4$.
		The set $\mathcal{S}_2$ is a tripartite ensemble of orthogonal quantum
		states that is initially locally distinguishable and remains
		non-activable in all bipartitions. In contrast, the set
		$\mathcal{S}_4$ 
		is activable in certain bipartitions (but not in every bipartition).
		This structural difference reveals a clear separation in the degrees
		of hidden non-locality for $\mathcal{S}_2$ and $\mathcal{S}_4$. 
	\end{rem}

	So far, we have discussed about the product states only. Nevertheless,
	in the following, we include entangled states into our discussions. 
	
	\section{Non-activable entangled states}\label{A3}
	Here we consider several sets that are local and non-activable by
	LOCC, i.e., $\mathcal{H}^{\text{LOCC}}_{1}(\mathcal{S}_j) = 0$, for
	different sets $\mathcal{S}_j$. 
	$\mathbf{|^\theta\mathcal{W}_{ij,kl}^{\pm}\rangle_{AB}=|\theta\rangle_{A}|{\mathcal{X}}_{ij}^{\pm}\rangle_{B}+|\bar{(\theta+2)}\rangle_{A}|{\mathcal{X}}_{kl}^{\pm}\rangle_{B}},\;$
	and
	$\mathbf{|^\theta\mathcal{W}_{ij,m}^{\pm}\rangle_{AB}=|\theta\rangle_{A}|{\mathcal{X}}_{ij}^{\pm}\rangle_{B}+|\bar{(\theta+2)}\rangle_{A}|{\mathcal{X}}_{m}\rangle_{B}},\;$
	also 
	$\mathbf{|^\theta\mathcal{\bar{W}}_{ij,kl}^{\pm}\rangle_{AB}=|{\mathcal{\xi}}_{ij}^{\pm}\rangle_{A}|\theta\rangle_{B}+|{\mathcal{\xi}}_{kl}^{\pm}\rangle_{A}|\bar{(\theta+2)}\rangle_{B}},\;$
	and
	$\mathbf{|^\theta\mathcal{\bar{W}}_{ij,m}^{\pm}\rangle_{AB}=|{\mathcal{\xi}}_{ij}^{\pm}\rangle_{A}|\theta\rangle_{B}+|{\mathcal{\xi}}_{m}\rangle_{A}|\bar{(\theta+2)}\rangle_{B}},\;$,
	for $\mathbf{\theta=0,1}$ and $\mathbf{\bar{(\theta+2)}}$ denotes $\mathbf{(\theta+2)}$ modulo $\mathbf{d}$.
	Here we consider the set $\mathcal{S}_5
	=\left\{\left|\phi_i\right\rangle_{A B}\right\}\in
	\mathbb{C}^4 \otimes \mathbb{C}^4$, which contains product
	states as well as entangled states. The set is given by-
	\begin{equation}\label{3}
		\mathcal{S}_5=
		\begin{Bmatrix}
			\mathbf{|^0\mathcal{W}_{01,23}^{\pm}\rangle_{AB},}\;
			\mathbf{|0\rangle_{A}|{\mathcal{X}}_{23}^{\pm}\rangle_{B},}\;
			\mathbf{|^0\mathcal{\bar{W}}_{12,3}^{+}\rangle_{AB},}\;\\
			\mathbf{|{\mathcal{\xi}}_{12}^{-}\rangle_{A}|0\rangle_{B},}\;
			\mathbf{|{\mathcal{\xi}}_{3}\rangle_{A}|0\rangle_{B},}\;
			\mathbf{|^1\mathcal{W}_{12,3}^{+}\rangle_{AB},}\;\\
			\mathbf{|1\rangle_{A}|{\mathcal{X}}_{12}^{-}\rangle_{B},}\;
			\mathbf{|1\rangle_{A}|{\mathcal{X}}_{3}\rangle_{B},}\;
			\mathbf{|{\mathcal{\xi}}_{23}^{\pm}\rangle_{A}|1\rangle_{B}}
		\end{Bmatrix}
	\end{equation}
	
	\begin{prop}
		The set $\mathcal{S}_5$ does not possess any activable non-locality under orthogonality-preserving LOCC. That is, its hidden non-locality $\mathcal{H}^{\text{LOCC}}_{1}(\mathcal{S}_5) = 0.$    
	\end{prop}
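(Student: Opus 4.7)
The plan is to adapt the argument of Proposition~1: show that any nontrivial orthogonality-preserving POVM element on either side collapses to the binary choice $\{\operatorname{diag}(1,0,0,0),\operatorname{diag}(0,1,1,1)\}$, and verify that each branch leaves a residual set that is still distinguishable, so no OP-LOCC protocol can push $\mathcal{S}_5$ into a locally indistinguishable set. The novel feature here is that $\mathcal{S}_5$ replaces several product tiles of $\mathcal{S}_1$ by the entangled vectors $|{}^0\mathcal{W}_{01,23}^{\pm}\rangle_{AB}$, $|{}^0\bar{\mathcal{W}}_{12,3}^{+}\rangle_{AB}$ and $|{}^1\mathcal{W}_{12,3}^{+}\rangle_{AB}$, so the orthogonality-preservation constraints on ${\mathcal{M}_{A}^m}^\dagger\mathcal{M}_{A}^m=[m^a_{ij}]_{4\times 4}$ become bilinear sums in the $m^a_{ij}$ rather than single equations.

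First I would exhaust the surviving product-product pairs whose Bob-side overlaps are nonzero in order to kill as many off-diagonal entries as possible: pairs such as $\{|0\rangle|\mathcal{X}_{23}^{+}\rangle,|1\rangle|\mathcal{X}_{12}^{-}\rangle\}$, $\{|\xi_{12}^{-}\rangle|0\rangle,|\xi_{3}\rangle|0\rangle\}$ and $\{|1\rangle|\mathcal{X}_{12}^{-}\rangle,|\xi_{23}^{+}\rangle|1\rangle\}$ directly recover the analogues of the single-pair conditions used in Proposition~1. For the few off-diagonal entries whose supporting product tiles have been absorbed into entangled vectors, I would invoke an entangled-product orthogonality such as $\langle{}^0\mathcal{W}_{01,23}^{+}|\bigl({\mathcal{M}_{A}^m}^\dagger\mathcal{M}_{A}^m\otimes I\bigr)|1\rangle_A|\mathcal{X}_{12}^{-}\rangle_B=0$, which yields a linear relation of the form $m^a_{01}\langle\mathcal{X}_{01}^{+}|\mathcal{X}_{12}^{-}\rangle+m^a_{21}\langle\mathcal{X}_{23}^{+}|\mathcal{X}_{12}^{-}\rangle=0$; once one of the two coefficients is known to vanish from a product-product pair, the other is pinned down.

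Having shown $[m^a_{ij}]$ is diagonal, the pairs $\{|\xi_{12}^{-}\rangle|0\rangle,|\xi_{3}\rangle|0\rangle\}$ and $\{|\xi_{23}^{\pm}\rangle|1\rangle\}$ force $m^a_{11}=m^a_{22}=m^a_{33}$, yielding ${\mathcal{M}_{A}^m}^\dagger\mathcal{M}_{A}^m=\operatorname{diag}(\delta_0,\gamma,\gamma,\gamma)$. The symmetry step of Proposition~1 then rules out $\delta_0,\gamma>0$ simultaneously (Bob's follow-up operator would be forced proportional to the identity), so Alice's only nontrivial choice is the two-outcome POVM $\{\operatorname{diag}(1,0,0,0),\operatorname{diag}(0,1,1,1)\}$. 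For each outcome the entangled vectors project onto a single product component (e.g.\ $|{}^0\mathcal{W}_{01,23}^{\pm}\rangle\mapsto|0\rangle|\mathcal{X}_{01}^{\pm}\rangle$ for the first outcome), so the post-measurement set is distinguishable, and iterating the argument over all subsequent rounds yields $\mathcal{H}^{\text{LOCC}}_{1}(\mathcal{S}_5)=0$.

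The main difficulty I expect is the ordering of the derivations: every entangled-vector orthogonality is a linear relation in several $m^a_{ij}$ at once, so product-product constraints must be extracted first in order to decouple them. A secondary check is to verify that no entangled-entangled constraint, such as between $|{}^0\mathcal{W}_{01,23}^{+}\rangle$ and $|{}^0\mathcal{W}_{01,23}^{-}\rangle$, contradicts the diagonal reduction or reintroduces off-diagonal cross terms. This should go through because the entangled vectors of $\mathcal{S}_5$ are supported on tile-pairs that already appeared as product pairs of $\mathcal{S}_1$, so the combinatorial skeleton used in Proposition~1 is preserved under the entangling rearrangement.
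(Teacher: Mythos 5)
Your overall route is the same as the paper's: reduce Alice's orthogonality-preserving POVM element $M={\mathcal{M}_A^m}^\dagger\mathcal{M}_A^m$ to diagonal form, then to $\operatorname{diag}(\delta_0,\gamma,\gamma,\gamma)$, rule out $\delta_0\neq 0$ and $\gamma\neq 0$ holding simultaneously by showing Bob's follow-up operator would be forced proportional to the identity, conclude that Alice's only nontrivial choice is $\{\operatorname{diag}(1,0,0,0),\operatorname{diag}(0,1,1,1)\}$, and check that each branch leaves a distinguishable set. Your off-diagonal analysis is sound and matches the paper's: product--product pairs first, then entangled--product orthogonalities such as $\langle{}^0\mathcal{W}_{01,23}^{+}|\bigl(M\otimes I\bigr)|1\rangle_A|\mathcal{X}_{12}^{-}\rangle_B=0$ to pin down the entries whose supporting tiles were absorbed into entangled vectors.

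However, one step fails as written: the pair $\{|\xi_{12}^{-}\rangle_A|0\rangle_B,\;|\xi_{3}\rangle_A|0\rangle_B\}$ does \emph{not} force $m^a_{11}=m^a_{22}$. That pair only yields $\langle\xi_{12}^{-}|M|\xi_3\rangle=\frac{1}{\sqrt{2}}(m^a_{13}-m^a_{23})=0$, a constraint among off-diagonal entries that is vacuous once $M$ is diagonal; it carries no information about the diagonal. To equate $m^a_{11}$ and $m^a_{22}$ you need a constraint of the form $\langle\xi_{12}^{+}|M|\xi_{12}^{-}\rangle=0$, and in $\mathcal{S}_5$ the tile $|\xi_{12}^{+}\rangle|0\rangle$ exists only inside the entangled vector $|{}^0\bar{\mathcal{W}}_{12,3}^{+}\rangle=|\xi_{12}^{+}\rangle_A|0\rangle_B+|\xi_3\rangle_A|2\rangle_B$. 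The required relation is the orthogonality of $|{}^0\bar{\mathcal{W}}_{12,3}^{+}\rangle$ with $|\xi_{12}^{-}\rangle_A|0\rangle_B$ after Alice's measurement: the cross term $\langle\xi_3|M|\xi_{12}^{-}\rangle\langle 2|0\rangle$ drops out and one is left with $\frac{1}{2}(m^a_{11}-m^a_{22})=0$, which is exactly the step the paper takes. This is an instance of your own stated principle (absorbed tiles require entangled-state constraints), but you did not apply it to the diagonal step. Without this substitution $M$ is only forced to $\operatorname{diag}(\delta_0,\delta_1,\gamma,\gamma)$ and the reduction to the two-outcome measurement does not follow; with it, the remainder of your argument goes through (modulo the minor point that under $\operatorname{diag}(1,0,0,0)$ the vectors $|{}^0\bar{\mathcal{W}}_{12,3}^{+}\rangle$ and $|{}^1\mathcal{W}_{12,3}^{+}\rangle$ are annihilated rather than projected onto a product component, which is harmless since eliminated states simply leave that branch).
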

	\begin{proof} The states in (\ref{3}) can be viewed as superpositions of the product states given in (\ref{1}), with the key difference being the inclusion of entangled states. Hence, the overall structure of the proof follows similarly to that of Proposition \ref{prop1}, with the key steps sketched below.
		\par     Without loss of generality, let us assume that Alice goes first. By analyzing suitable pairs of states, such as $\mathbf{|0\rangle_{A}|{\mathcal{X}}_{23}^{+}\rangle_{B}}$ and $\mathbf{|1\rangle_{A}|{\mathcal{X}}_{3}\rangle_{B}}$, and using the orthogonality-preserving condition, we find that the off-diagonal terms of Alice's measurement operator must vanish. Repeating this for other carefully chosen state pairs, we conclude that Alice's operator must be diagonal. Furthermore, symmetry in the state structure leads to equal diagonal entries: $\delta_1 = \delta_2 = \delta_3 = \gamma$. Therefore, we obtain:
		\[
		{\mathcal{M}_{A}^m}^{\dagger}\mathcal{M}_{A}^m = \operatorname{diag}(\delta_0, \gamma, \gamma, \gamma).
		\]
		
		Assuming both $\delta_0 \neq 0$ and $\gamma \neq 0$, Bob must apply a nontrivial operation. However, a similar analysis on Bob's side implies that his operator must also be diagonal and further turns out to be proportional to the identity: $
		{\mathcal{M}_{B}^m}^{\dagger}\mathcal{M}_{B}^m = \lambda_0 I,
		$
		which contradicts the requirement of a nontrivial measurement if the set is to be initially locally distinguishable. Hence, either $\delta_0 = 0$ or $\gamma = 0$, implying that Alice must begin with a nontrivial two-outcome measurement:
		\[
		{\mathcal{M}_{A}^{1}}^\dagger \mathcal{M}_{A}^{1} = \operatorname{diag}(1, 0, 0, 0), \quad
		{\mathcal{M}_{A}^{2}}^\dagger \mathcal{M}_{A}^{2} = \operatorname{diag}(0, 1, 1, 1).
		\]
		If the outcome `$1$' clicks, Bob can distinguish the remaining states by projections onto $\mathbf{\left|0\pm1\right\rangle}$ and $\mathbf{\left|2\pm3\right\rangle}$. If the outcome is `$2$', the remaining subspace can again be handled by Bob using a similar measurement structure. See Fig.~\ref{fig44-4} for the measurement-induced partitioning of the state space.
		
		\begin{figure}[h]
			\centering
			\includegraphics[width=0.4\textwidth]{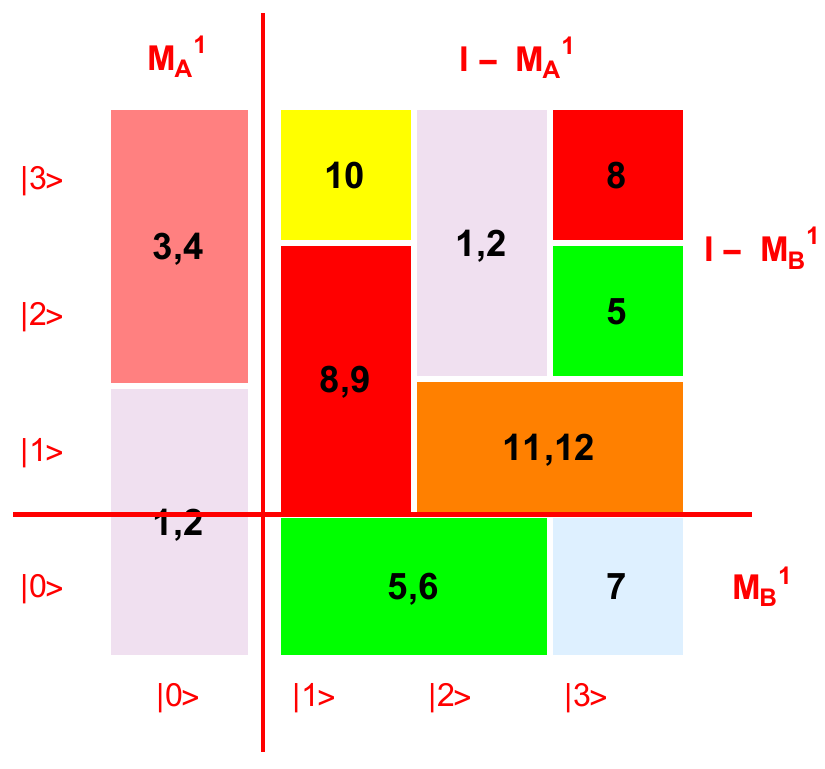}
			\caption{\emph{(Color online)} Tile structure of the states in ${\mathbb{C}}^{4}\otimes{\mathbb{C}}^{4}$, given in (\ref{3}). The indices of the tiles follow the ordering of the states in $\mathcal{S}_5$. Tile colors represent the measurement configuration for both parties.}
			\label{fig44-4}
		\end{figure}
		
		The process continues for a finite number of steps. At each step, the remaining set remains locally distinguishable. Therefore, the set $\mathcal{S}_5$ cannot be transformed into an indistinguishable one through any orthogonality-preserving LOCC sequence. This shows that $\mathcal{S}_5$ is \textit{not activable} under LOCC, and this completes the proof.
	\end{proof}
	\begin{figure}[!hb]
		\centering
		\includegraphics[width=0.40\textwidth]{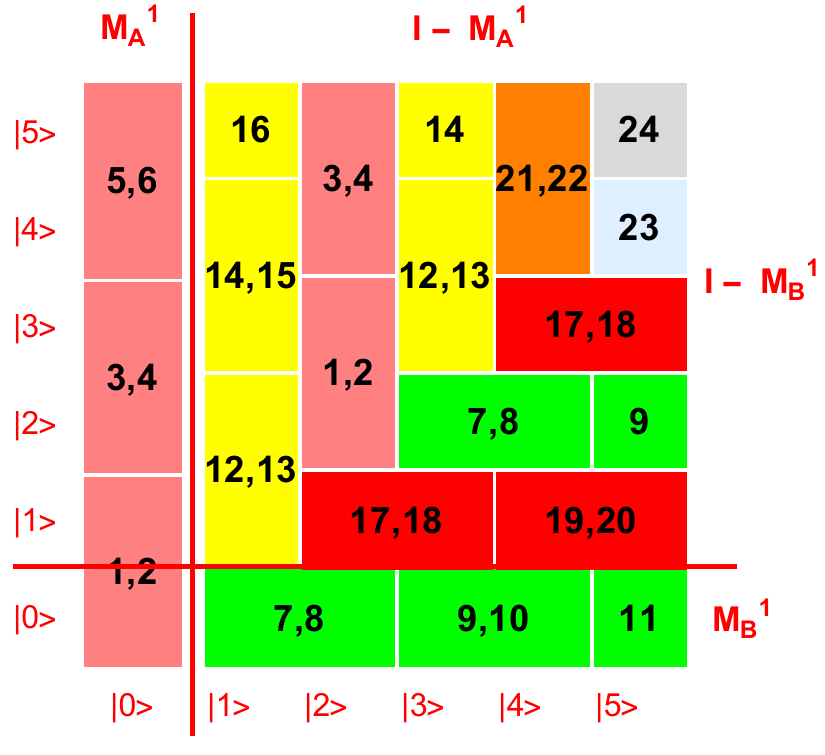}
		\caption{\emph{(Color online)} Tiles representation of states in
			$\mathbb{C}^6 \otimes \mathbb{C}^6$, indexed by states of
			$\mathcal{S}_6$ in order and colored according to the possibility of
			simultaneous local measurements by both parties.
		}
		\label{fig44-5}
	\end{figure}
	It is not very difficult to construct the set $\mathcal{S}_5$
	in arbitrary higher dimensions from its hereditary symmetry. For the
	case of higher dimensions, the only change will be the number of
	classical rounds required for discrimination task. Eventually for
	higher dimensions, the LOCC round numbers drastically increase for the
	corresponding tasks, but for each round the post-measurement states
	becomes distinguishable (local). Also, one can find the trade off
	between the dimensions of the systems and the corresponding required
	LOCC round number for the discrimination tasks. Now, consider the set
	$\mathcal{S}_6 =\left\{\left|\phi_i\right\rangle_{A
		B}\right\}\in\mathbf{\mathbb{C}^6 \otimes \mathbb{C}^6}$, by
	\begin{multline}
		\mathcal{S}_6=
		\begin{Bmatrix}
			\mathbf{|^0\mathcal{W}_{01,23}^{\pm}\rangle_{AB},}\;
			\mathbf{|^0\mathcal{W}_{23,45}^{\pm}\rangle_{AB},}\;
			\mathbf{|0\rangle_{A}|{\mathcal{X}}_{45}^{\pm}\rangle_{B},}\;\\
			\mathbf{|^0\mathcal{\bar{W}}_{12,34}^{\pm}\rangle_{AB},}\;
			\mathbf{|^0\mathcal{\bar{W}}_{34,5}^{+}\rangle_{AB},}\;
			\mathbf{|{\mathcal{\xi}}_{34}^{-}\rangle_{A}|0\rangle_{B},}\;\\
			\mathbf{|{\mathcal{\xi}}_{5}\rangle_{A}|0\rangle_{A},}\;
			\mathbf{|^1\mathcal{W}_{12,34}^{\pm}\rangle_{AB},}\;
			\mathbf{|^1\mathcal{W}_{34,5}^{+}\rangle_{AB},}\;\\
			\mathbf{|1\rangle_{A}|{\mathcal{X}}_{34}^{-}\rangle_{B},}\;
			\mathbf{|1\rangle_{A}|{\mathcal{X}}_{5}\rangle_{B},}\;
			\mathbf{|^1\mathcal{\bar{W}}_{23,45}^{\pm}\rangle_{AB},}\;\\
			\mathbf{|{\mathcal{\xi}}_{45}^{\pm}\rangle_{A}|0\rangle_{B},}
			\mathbf{|4\rangle_{A}|{\mathcal{X}}_{45}^{\pm}\rangle_{B},\;}
			\mathbf{|{\mathcal{\xi}}_{5}\rangle_{A}|4\rangle_{A},}\;\\
			\mathbf{|\xi_{5}\rangle_{A}|{\mathcal{X}}_{5}\rangle_{B}}
		\end{Bmatrix}\label{4}
	\end{multline}
	By the similar technique as given for (\ref{3}), it is
	possible to show that the set $\mathcal{S}_6$ does not possess any
	activable non-locality under orthogonality-preserving LOCC, i.e.,
	$\mathcal{H}^{\text{LOCC}}_{1}(\mathcal{S}_6) = 0$. See
	Fig.~\ref{fig44-5}.	
	Here we now discuss the following hierarchy. 
	The sets of states considered in (\ref{4}) and (\ref{5}) are
	equally local when considered with respect to perfect discrimination
	by LOCC, as in both cases, the sets are perfectly distinguishable by
	LOCC. But the consideration of hidden non-locality provides us the
	privilege to put a hierarchy among the sets. Precisely, we can claim
	that the sets of (\ref{4}) are more local compared to those of
	(\ref{5}), because the latter class contains hidden non-locality while
	for the former case there is no hidden non-locality though 
	the set of (\ref{4}) contains entangled states but the set of (\ref{5}) does not. 
	\section{Conclusion}\label{A4}

	In this manuscript, we have presented structures for locally
	distinguishable product states and entangled states such that they
	cannot be transformed to a locally indistinguishable set under
	orthogonality-preserving LOCC. Furthermore, we have constructed sets
	of multipartite states which are not activable in any bipartition. In
	other words, such locally distinguishable sets cannot be transformed
	to a locally indistinguishable set in any bipartition under
	orthogonality-preserving LOCC. This is, in fact, the worst case
	scenario in view of non-locality activation. We also have constructed
	locally distinguishable sets which can be transformed to locally
	indistinguishable sets under orthogonality-preserving LOCC. Then, we
	have classified the locally distinguishable states by introducing
	hierarchies. 
	The structures that we have provided here can be easily generalized for high-dimensional Hilbert spaces. 
	Finally, we have compared between locally distinguishable product states and entangled states.
	 
	\section*{ACKNOWLEDGMENT}
	SH acknowledges partial funding by the European Union under Horizon
	Europe (grant agreement no.~101080086). Views and opinions expressed
	are however those of the author(s) only and do not necessarily reflect
	those of the European Union or the European Commission. Neither the
	European Union nor the granting authority can be held responsible for
	them.

\end{document}